\newcommand{\RomanNumeralCaps}[1]{\MakeUppercase{\romannumeral #1}}
\theoremstyle{remark}
\newcommand{\bmA}{\mathbf A}
\newcommand{\bmI}{\mathbf I}
\newcommand{\bmB}{\mathbf B}
\newcommand{\bmC}{\mathbf C}
\newcommand{\bmD}{\mathbf D}
\newcommand{\bmP}{\mathbf P}
\newcommand{\bmG}{\mathbf G}
\newcommand{\bLambda}{\boldsymbol{\Lambda}}
\newcommand{\bPsi}{\boldsymbol{\Psi}}
\newcommand{\bmV}{\mathbf V}
\newcommand{\bmU}{\mathbf U}
\theoremstyle{plain}
\newtheorem{thm}{Theorem}[] 
\newtheorem{lemma}{Lemma}
\newcommand{\sot}[1]{} 
\newcounter{revc}
\makeatletter \zref@newprop{revcontent}{} \zref@addprop{main}{revcontent}
\newcommand{\revi}[2]{
\zref@setcurrent{revsec}{\thesection}%
\zref@setcurrent{revpage}{\thepage}%
\zref@setcurrent{revcontent}{#2}%
\refstepcounter{revc}%
\label{#1}%
\zlabel{#1}%
\textcolor{blue}{#2}%
}
\newcommand{\revinu}[2]{%
\zref@setcurrent{revsec}{\thesection}%
\zref@setcurrent{revcontent}{#2}%
\refstepcounter{revc}%
\zlabel{#1}%
\label{#1}
#2 }
\newcommand{\revr}[2]{%
\zref@setcurrent{revsec}{\thesection}%
\zref@setcurrent{revcontent}{#2}%
\refstepcounter{revc}%
\zlabel{#1}%
\label{#1} \sot{#2}} \makeatother
\def\expandafter\quote\expandafter{\quote\onehalfspacing\fontsize{12}{14}\selectfont}
\definecolor{mycolor}{rgb}{0.122, 0.435, 0.698}
\newmdenv[innerlinewidth=0.5pt, roundcorner=4pt,linecolor=mycolor,innerleftmargin=6pt,
innerrightmargin=6pt,innertopmargin=6pt,innerbottommargin=6pt]{mybox}
\def\BibTeX{{\rm B\kern-.05em{\sc i\kern-.025em b}\kern-.08em T\kern-.1667em\lower.7ex\hbox{E}\kern-.125emX}}
\begin{document}
\sloppy
\title{Practical Hybrid Beamforming for Millimeter Wave Massive MIMO Full Duplex with Limited Dynamic Range}

\author{Chandan Kumar Sheemar, \IEEEmembership{Student Member, IEEE}, Christo Kurisummoottil Thomas, \IEEEmembership{Member, IEEE}, \\ and Dirk Slock, \IEEEmembership{Fellow, IEEE}

\thanks{Chandan Kumar Sheemar and Dirk Slock are with the Communication Systems Department at EURECOM, Sophia Antipolis, 06410, France (emails:sheemar@eurecom.fr,slock@eurecom.fr);  }
\thanks{Christo Kurisummoottil Thomas is with Qualcomm Finland RFFE Oy, Keilaranta 8, 02150 Espoo
(e-mail: ckurisum@qti.qualcomm.com).}
}

\IEEEtitleabstractindextext{\begin{abstract}
Full Duplex (FD) radio has emerged as a promising solution to increase the data rates by up to a factor of two via simultaneous transmission and reception in the same frequency band. This paper studies a novel hybrid beamforming (HYBF) design to maximize the weighted sum-rate (WSR) in a single-cell millimeter wave (mmWave) massive multiple-input-multiple-output (mMIMO) FD system. Motivated by practical considerations, we assume that the multi-antenna users and hybrid FD base station (BS) suffer from the limited dynamic range (LDR) noise
due to non-ideal hardware and an impairment aware HYBF approach is adopted by integrating the traditional LDR noise model in the mmWave band. In contrast to the conventional HYBF schemes, our design also considers the joint sum-power and the practical per-antenna power constraints. A novel interference, self-interference (SI) and LDR noise aware optimal power allocation scheme for the uplink (UL) users and FD BS is also presented to satisfy the joint constraints. The maximum achievable gain of a multi-user mmWave FD system over a fully digital half duplex (HD) system with different LDR noise levels and numbers of the radio-frequency (RF) chains is investigated. Simulation results show
that our design outperforms the HD system with only a few RF chains at any LDR noise level. The advantage of having amplitude control at the analog stage is also examined, and additional gain for the mmWave FD system becomes evident when the number of RF chains at the hybrid FD BS is small.
\end{abstract}

\begin{IEEEkeywords}
Millimeter wave, Full Duplex, Hybrid Beamforming, Limited Dynamic Range, Minorization-Maximization
\end{IEEEkeywords}
}

\maketitle
\section{Introduction} \label{Intro}
\IEEEPARstart{T}{he} revolution in wireless communications has led to an exponential increase in the data rate requirements and number of users. The millimeter wave (mmWave) frequency band $30 - 300$~GHz can accommodate the ever-increasing data demands and results to be a vital resource for future wireless communications \cite{pi2011introduction}. 
It offers much wider bandwidths than the traditional cellular networks, and the available spectrum at such higher frequencies is $200$ times greater \cite{rangan2014millimeter}. Full Duplex (FD) communication in mmWave has the potential to further double the spectral efficiency by offering simultaneous transmission and reception in the same frequency band. Moreover, it can be beneficial for efficient management of the vast mmWave spectrum, reducing end-to-end delays/latency, enabling advanced joint communication and sensing, and solving the hidden node problem \cite{liu2019hidden,parlin2020full,kabir2019scalable,barneto2021full}.

Self-interference (SI), which can be $90-110$ dB higher than the received signal \cite{gan2019full,rosson2019towards}, is a key challenge to achieve an ideal FD operation. Given the tremendous amount of SI, signal reception is impossible without a proper SI cancellation scheme. Beamforming is a powerful tool for FD to mitigate the SI while serving multiple users and can lead to a significant performance gain compared to a half duplex (HD) system \cite{huberman2014mimo,aquilina2017weighted,riihonen2012analog,day2012full,taghizadeh2016transmit,cirik2016beamforming,antonio2014sinr,biswas2020design,ciriklinear_rev,taghizadeh2018hardware}. However, its gain in practical communication systems is restricted by the limited dynamic range (LDR) of the radio-frequency (RF) chains \cite{day2012full}. The signal may suffer from LDR noise due to the distortions introduced by non-ideal power amplifiers (PAs), analog-to-digital-converters (ADCs), digital-to-analog-converters, mixers and low noise PAs. These impairments dictate the residual SI power which cannot be cancelled and therefore establish the achievable gain for FD \cite{day2012full}. This adverse effect urges the requirement of impairment aware beamforming designs and investigating their performance in terms of the LDR noise levels such that correct conclusions on the achievable gain of FD could be drawn. Such an approach for the fully digital FD systems can be adopted with the well-established LDR noise model available in \cite{aquilina2017weighted,riihonen2012analog,day2012full,taghizadeh2016transmit,cirik2016beamforming,antonio2014sinr,biswas2020design,ciriklinear_rev,taghizadeh2018hardware}. In general, impairment aware beamforming is more robust to distortions and can significantly outperform the naive schemes \cite{schenk2008rf,aghdam2019distortion}, see, e.g., \cite[Figure 2]{aghdam2019distortion}.

The deployment of multi-user mmWave FD systems requires the FD base stations (BSs) to be equipped with a massive number of antennas to overcome the propagation challenges. Owing to the hardware cost, they will have to rely on a hybrid architecture consisting of only a few RF chains. Therefore, efficient hybrid beamforming (HYBF) schemes are required for such transceivers to manage the SI and interference jointly by performing large-dimensional phasor processing in the analog
domain and lower-dimensional digital processing.

\subsection{State-of-the-art and Motivation}

In \cite{lopez2019analog,EURECOM+6506,satyanarayana2018hybrid,thomas2019multi,balti2020modified,huang2020learning,palacios2019hybrid}, novel HYBF designs for a point-to-point mmWave massive MIMO (mMIMO) FD system are studied. HYBF schemes of mMIMO FD relays and integrated access and backhaul are presented in \cite{abbas2016full,cai2020two,han2019full} and \cite{sheemar2021massive}, respectively. HYBF designs with single antenna uplink (UL) and downlink (DL) users for a single-cell and a multi-cell mmWave FD system are proposed in \cite{da20201} and \cite{zhao2020robust}, respectively. In \cite{roberts2020hybrid}, HYBF for mmWave mMIMO FD with only one UL and one DL multi-antenna user, under the receive LDR is proposed. In \cite{roberts2019beamforming}, HYBF for two fully connected mMIMO FD nodes that approaches SI-free sum-spectral efficiency is proposed. In \cite{roberts2020equipping}, HYBF for a mmWave FD system equipped with analog SI cancellation stage is presented. 
In \cite{sheemar2021hybridMIMO}, HYBF to generalize the point-to-point mmWave mMIMO FD communication to the case of a K-pair links is presented. Frequency-selective HYBF for a wide-band mmWave FD system is studied in \cite{roberts2020frequency}.

The literature on multi-antenna multi-user mmWave FD systems is limited only to the case of one UL and one DL user  \cite{roberts2020hybrid,roberts2019beamforming,roberts2020equipping,roberts2020frequency}. In \cite{roberts2020hybrid}, the receive side LDR of FD BS is also considered, which is dominated by the quantization noise of the ADCs. However, LDR noise from the transmit side is ignored, which also affects the performance of FD systems significantly \cite{korpi2014full}. The effect of cross-interference generated from the UL user towards the DL user is also not considered in \cite{roberts2020hybrid}, which can have a major impact on the achievable performance. Cross-interference generated from the neighbouring cells is well investigated in the dynamic time-division-duplexing networks \cite{sheemar2021game,da2021full,kim2020dynamic,de2019bidirectional,guo2018dynamic}, and it is more harmful to the multi-user FD systems as it occurs in the same cell. For example, consider the case of a small cell, in which BSs and users are expected to operate with a similar amount of transmit power \cite{guo2018dynamic}. Suppose that one FD BS simultaneously serves one UL and one DL user and that both the users are located close to each other and sufficiently far from the BS. In such a case, cross-interference can become as severe as the SI and can completely drown the useful signal intended for the DL user if not considered in the beamforming design. In a multi-user scenario with multiple UL users located near the DL users, each DL user suffers from cross-interference, which is summed over all the UL users' transmit power, with each UL user transmitting with a similar amount of power as the BS. In such a case, cross-interference can become even more severe than the SI if not considered in the design.

\subsection{Main Contributions} \label{contributo}
We present a novel HYBF design to maximize the weighted sum-rate (WSR) in a single-cell mmWave mMIMO FD system, i.e., for multiple multi-antenna UL and DL users. The users are assumed to have a limited number of antennas and digital processing capability. The FD BS is assumed to have a massive number of antennas and hybrid processing capability. Our design is based on alternating optimization and relies on the mathematical tools offered by minorization-maximization \cite{StoicaSelen:SPmag0104}. The users and BS are assumed to be suffering from the LDR noise due to non-ideal hardware, modelled with the traditional LDR model \cite{day2012full} and by extending it to the case of a hybrid transceiver, respectively.
Our work represents the first-ever impairment aware HYBF approach for mmWave FD and its analysis as a function of the LDR noise levels. Extension of the LDR noise model presented herein is applicable to any mmWave FD scenario.

In contrast to the conventional HYBF designs for mmWave FD, in this work, the beamformers are designed under the joint sum-power and the practical per-antenna power constraints. The sum-power constraint at each terminal is imposed by the regulations, which limits its total transmit power. In practice, each transmit antenna is equipped with its PA\footnote{The mMIMO systems are also expected to be deployed with one PA per-antenna to enable the deployment of very low-cost PAs \cite{larsson2014massive}.} \cite{yu2007transmitter} and the per-antenna power constraints arise due to power consumption limits imposed on the physical PAs \cite{lan2004input,chaluvadi2019optimal,yu2007transmitter,EURECOM+6498,cirik2016linear}. We also present a novel SI, interference, cross-interference and LDR noise aware optimal power allocation scheme to meet the joint constraints. 

Compared to the digital part, optimization of the analog stage is more challenging as it must obey the unit-modulus constraint. Recently, new transceivers have started to emerge, which with the aid of amplitude modulators (AMs), also allow amplitude control for the analog stage  \cite{castellanos2018hybrid,majidzadeh2018rate,roberts2020hybrid}. Such transceivers alleviate the unit-modulus constraint but require additional hardware.
Hence, we study both the unit-modulus and AMs cases and investigate when the amplitude control for mmWave FD could be advantageous.
In practice, as the analog beamformer and analog combiner can assume only finite values, a quantization constraint is also imposed on them during the optimization process. In our problem formulation, the WSR does not depend on the digital combiners, which are omitted in the design. They must be chosen as the minimum-mean-squared-error (MMSE) combiners after the convergence of the proposed algorithm. By omitting the digital combiners, equal to the sum of the number of UL and DL users, the HYBF design simplifies, and the per-iteration computational complexity reduces significantly.  

Simulation results show that our design outperforms a fully digital HD system and can deal with the SI, interference and cross-interference with only a few RF chains. Results are reported with different LDR noise levels, and significant performance gain is observed at any level.

In summary, the contributions of our work are:
\begin{itemize}

\item Extension of the LDR noise model for the mmWave band.

\item Introduction of the WSR maximization problem formulation for HYBF in a single-cell mmWave mMIMO FD system affected by the LDR noise.

\item A novel SI, interference, cross-interference, LDR noise and the practical per-antenna power constraints aware HYBF design.
 
\item Investigation of the achievable WSR in a multi-user mmWave FD system as a function of the LDR noise.

\item Optimal interference, SI, LDR noise and the per-antenna power constraints aware power allocation scheme for the hybrid FD BS and UL users.

\end{itemize}

\textbf{Paper Organization:} The rest of the paper is organized as follows. Section \ref{sytem_model} presents the system model, problem formulation and extends the LDR noise model. Sections \ref{simplificazione_problem} and \ref{sec_hybf} present the minorization-maximization method and a novel HYBF design, respectively. Finally, Sections \ref{simulazioni} and \ref{conclusioni} present the simulation results and conclusions, respectively.

\textbf{Mathematical Notations:} Boldface lower and upper case characters denote vectors and matrices, respectively. $\mathbb{E}\{\cdot\}, \mbox{Tr}\{\cdot\}, (\cdot)^H, (\cdot)^T$, $\otimes$, $\bm{I}$, $\bm{D}_{d}$ and $i$
denote expectation,
trace, conjugate transpose, transpose, Kronecker product, identity matrix, $d$ dominant vectors selection matrix and the imaginary unit, respectively. $\mbox{vec}(\bm{X})$ stacks the columns of $\bm{X}$ into a vector $\bm{x}$ and $\mbox{unvec}(\bm{x})$ reshapes $\bm{x}$ into $\bm{X}$. $\angle\bm{X}$ and $\angle\bm{x}$ return the unit-modulus phasors of $\bm{X}$ and the unit-modulus phasor of $\bm{x}$, respectively. $\mbox{Cov}(\cdot)$ and diag$(\cdot)$ denote the covariance and diagonal matrices, respectively. $\mbox{SVD}(\bm{X})$ returns the singular value decomposition (SVD) of $\bm{X}$. Element of $\bm{X}$ at the $m$-th row and $n$-th column is denoted as $\bm{X}(m,n)$. Vector of zeros of size $M$ is denoted as $\bm{0}_{M \times 1}$. Operators $|\mathbf{X}|$ and $|x|$ return a matrix of moduli of $\mathbf{X}$ and the modulus of scalar $x$, respectively.

 \begin{figure} 
	\centering
  \includegraphics[width=8.5cm, height=7cm ]{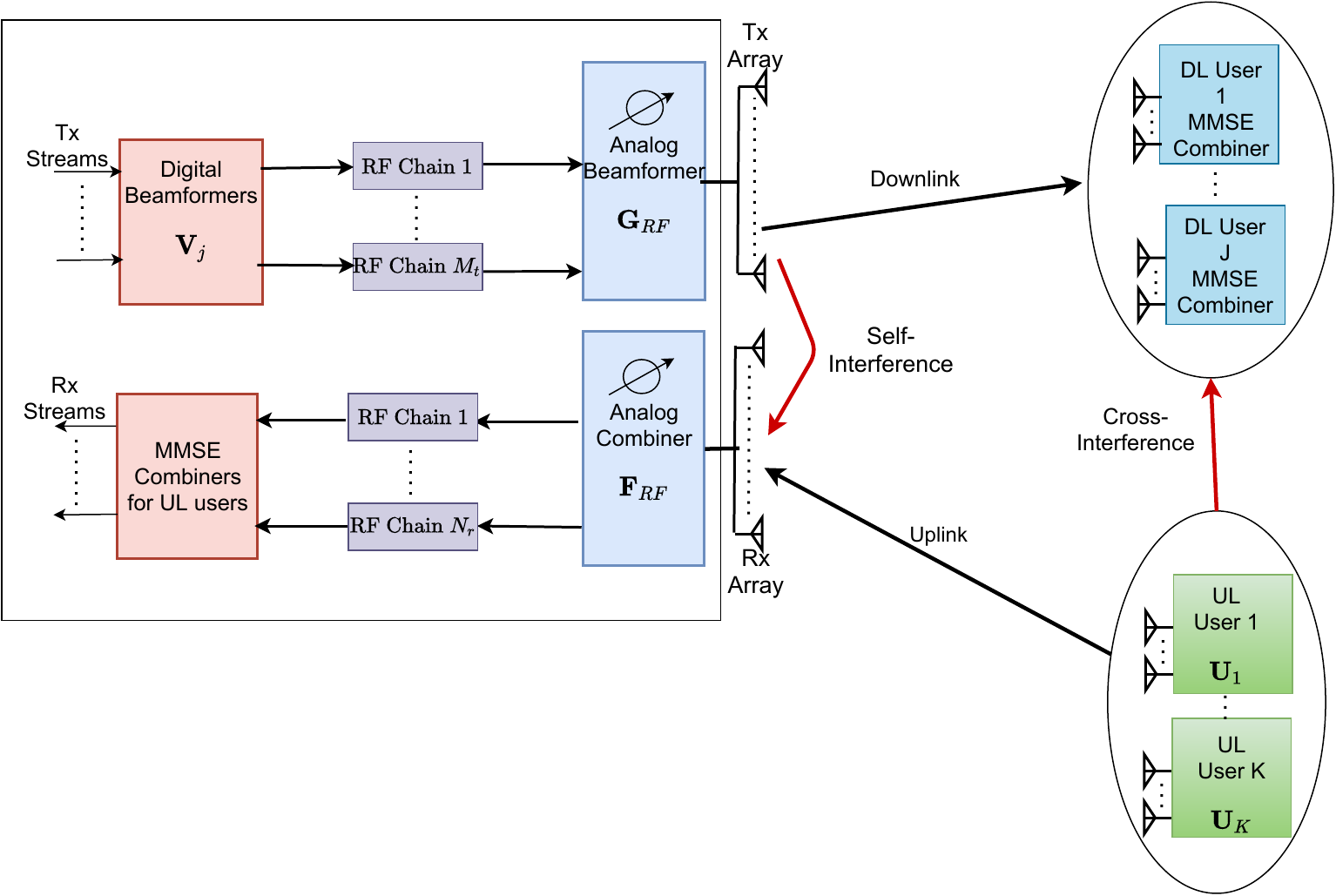}
  	\centering
 	\caption{FD in mmWave with HYBF to serve multi-antenna users. Tx and Rx denote transmit and receive, respectively.}
 	\label{FD_basestation} \vspace{-2mm}
\end{figure}
\section{System Model} \label{sytem_model}
We consider a single-cell mmWave FD system consisting of one hybrid FD BS serving $J$ DL and $K$ UL fully digital multi-antenna users, as shown in Fig. \ref{FD_basestation}. We assume perfect channel state information (CSI)\footnote{The CSI of the mmWave FD systems can be acquired similarly as in \cite{chen2017structured} for the mmWave HD system and it is part of the ongoing research \cite{roberts2021millimeter}.}. The FD BS is assumed to have $M_t$ transmit and $N_r$ receive RF chains, and $M_0$ transmit and $N_0$ receive antennas. Let $\mathcal{U} = \{1,...,K\}$ and $\mathcal{D} = \{1,...,J\} $ denote the sets containing the indices of $K$ UL and $J$ DL users, respectively. Let $M_{k}$ and $N_{j}$ denote the number of transmit and receive antennas for $k$-th UL and $j$-th DL user, respectively. We consider a multi-stream approach and the number of data streams for $k$-th UL and $j$-th DL user are denoted as $u_k$ and $v_j$, respectively. Let $\bm{U}_k \in \mathbb{C}^{M_k \times u_k}$ and $\bm{V}_j \in \mathbb{C}^{M_t \times v_j}$ denote the precoders for white unitary variance data streams $\bm{s}_{k}\in \mathbb{C}^{u_k \times 1}$ and $\bm{s}_{j} \in \mathbb{C}^{v_j \times 1}$, respectively. Let $\bm{G}_{RF} \in \mathbb{C}^{M_0 \times M_t}$ and $\bm{F}_{RF}  \in \mathbb{C}^{N_0 \times N_r}$ denote the fully connected analog beamformer and combiner at the FD BS, respectively. Let $\mathcal{P} = \{1,e^{i 2 \pi/n_{ps}},...,e^{i 2 \pi n_{ps}-1/n_{ps}}\}$ denote the set of $n_{p_s}$ possible discrete values that the phasors at the analog stage can assume on unit-circle. 

For HYBF with the unit-modulus constraint, we define the quantizer function $\mathbb{Q}_P(\cdot)$ to quantize the unit-modulus phasors of analog beamformer $\bm{G}_{RF}$ and combiner $\bm{F}_{RF}$ such that $\mathbb{Q}_P(\angle \bm{G}_{RF}(m,n)) \in \mathcal{P}$ and $\mathbb{Q}_P(\angle \bm{F}_{RF}(m,n)) \in \mathcal{P}$, $\forall m,n$. For HYBF with amplitude control, the phase part is still quantized with $\mathbb{Q}_P(\cdot)$ and belongs to $\mathcal{P}$. Let $\mathcal{A}=\{a_0,....,a_{A-1}\}$ denote the set of $A$ possible values that the amplitudes can assume. Let $\mathbb{Q}_{A}(\cdot)$ denote the quantizer function to quantize the amplitudes of $\bm{G}_{RF}$ and $\bm{F}_{RF}$ such that $\mathbb{Q}_{A}(|\bm{G}_{RF}(m,n)|) \in \mathcal{A}$ and $\mathbb{Q}_{A}(|\bm{F}_{RF}(m,n)|) \in \mathcal{A}, \; \forall m,n$. A complex number $\bm{G}_{RF}(m,n)$ with amplitude in $\mathcal{A}$ and phase part in $\mathcal{P}$ can be written as $\bm{G}_{RF}(m,n) = \mathbb{Q}_{A}(|\bm{G}_{RF}(m,n)|) \mathbb{Q}_P(\angle \bm{G}_{RF}(m,n))$.
The thermal noise vectors for FD BS and $j$-th DL user are denoted as $ \bm{n}_0\sim  \mathcal{CN}(0,\sigma_0^2 \bm{I}_{N_0})$ and $\bm{n}_j \sim \mathcal{CN}(0,\sigma_j^2 \bm{I}_{N_j})$, respectively. Let $\bm{c_k}$ and $\bm{e}_j$ denote the LDR noise vectors for $k$-th UL and $j$-th DL user, respectively, which can be modelled as \cite{day2012full}
\begin{equation}  
    \bm{c}_k \sim \mathcal{CN}\Big(\bm{0}_{M_k \times 1}, k_k\; \mbox{diag}(\bm{U}_k  \bm{U}_k^H )\Big),\label{tx_ldr_UL}
\end{equation}

\begin{equation}
    \bm{e_j} \sim \mathcal{CN}\Big({\bm{0}_{N_j \times 1} , \beta_j\; \mbox{diag}(\bm{\Phi}_j)}\Big), \label{rx_ldr_DL}
\end{equation}
where $k_k \ll 1$, $\beta_j \ll 1, \bm{\Phi}_j =  \mbox{Cov}(\bm{r}_j)$ and $\bm{r}_j$ denotes the undistorted signal received by $j$-th DL user.
Let $\bm{c_0}$ and $\bm{e_0}$ denote the LDR noise vectors in transmission and reception for FD BS, respectively. We model them as

\begin{equation} 
   \bm{c}_0 \sim \mathcal{CN}\Big(\bm{0}_{M_0 \times 1}, k_0\; \mbox{diag}(\sum_{n \in \mathcal{D}} \bm{G}_{RF} \bm{V}_n  \bm{V}_n^H \bm{G}_{RF}^H)\Big) ,
   \label{tx_ldr}
\end{equation}
\begin{equation}
    \bm{e_0} \sim \mathcal{CN}\Big({\bm{0}_{N_r \times 1} , \beta_0\; \mbox{diag}(\bm{\Phi}_0)}\Big),   \label{Rx_ldr}
\end{equation}
where $ k_0 \ll 1 $, $\beta_0 \ll 1, \bm{\Phi}_0 = \mbox{Cov}(\bm{r}_0)$ and $\bm{r}_0$ denotes the undistorted signal received by FD BS after the analog combiner $\bm{F}_{RF}$. Note that \eqref{tx_ldr} extends the transmit LDR noise model from \cite{day2012full} to the case of a hybrid transmitter. For the hybrid receiver at the mmWave FD BS, ADCs, the most dominant sources of receive LDR noise, are placed after the analog combiner $\bm{F}_{RF}$. Consequently, $\bm{e_0}$ in \eqref{Rx_ldr} considers the undistorted signal received after the analog combiner. We remark that the extension presented in \eqref{tx_ldr}-\eqref{Rx_ldr} is slightly simplified. In practice, as some circuitry might be shared among multiple antennas, it can lead to some correlation.
 
Let $\bm{y}$ and $\bm{y}_j$ denote the signals received by the FD BS and $j$-th DL user, respectively, which can be written as
 
\begin{equation}
\begin{aligned}
       \bm{y} = &\bm{F}_{RF}^H \sum_{k \in \mathcal{U}} \bm{H}_k   \bm{U}_k  \bm{s}_{k} +  \bm{F}_{RF}^H \sum_{k \in \mathcal{U}} \bm{H}_k \bm{U}_k \bm{c}_k  + \bm{F}_{RF}^H \bm{n}_0\\& +  \bm{F}_{RF}^H \bm{H}_0  \sum_{j \in \mathcal{D}} \bm{G}_{RF} \bm{V}_j \bm{s}_{j} +  \bm{F}_{RF}^H \bm{H}_0 \bm{c}_0  +  \bm{e}_0,
\end{aligned} \label{BS_side}
\end{equation}
\begin{equation}
\begin{aligned}
         \bm{y}_j = \bm{H}_j  \sum_{n \in \mathcal{D}} &\bm{G}_{RF} \bm{V}_n  \bm{s}_n  + \bm{H}_j  \sum_{n \in \mathcal{D}} \bm{G}_{RF} \bm{V}_n \bm{c}_0  + \bm{e}_j + \bm{n}_j \\& + \sum_{k \in \mathcal{U}} \bm{H}_{j,k}  \bm{U}_k  \bm{s}_{k} + \sum_{k \in \mathcal{U}} \bm{H}_{j,k} \bm{c}_k .
\end{aligned} \label{Rx_side}
\end{equation}
The matrices $\bm{H}_k \in \mathbb{C}^{N_0 \times M_k}$ and $\bm{H}_j \in \mathbb{C}^{N_j \times M_0}$ denote channel response from the $k$-th UL user to BS and from the BS to $j$-th DL user, respectively. The matrices $\bm{H}_0\in  \mathbb{C}^{N_0 \times M_0}$ and $\bm{H}_{j,k} \in  \mathbb{C}^{N_j \times M_k}$ denote SI channel response for FD BS and cross-interference channel response between $k$-th UL and $j$-th DL users, respectively. At the mmWave, the channel response $\bm{H}_k$ can be modelled as\cite{satyanarayana2018hybrid}
\begin{equation} 
    \bm{H}_k = \sqrt{\frac{M_k N_0}{N_c N_{p}}} \sum_{n_c = 1}^{N_c} \sum_{n_p = 1}^{N_p} \alpha_{k}^{(n_p,n_c)} \bm{a}_r(\phi_k^{n_p,n_c}) \bm{a}_{t}^T(\theta_k^{n_p,n_c}), \label{channel_model}
\end{equation} 
where $N_c$ and $N_{p}$ denote the number of clusters and number of rays (Figure 1 \cite{satyanarayana2018hybrid}), respectively, and $\alpha_{k}^{(n_p,n_c)} \sim \mathcal{CN}(0,1)$ denotes a complex Gaussian random variable whose amplitude and phase are Rayleigh and uniformly
distributed, respectively. The vectors $\bm{a}_r(\phi_k^{n_p,n_c})$ and  $\bm{a}_t^T(\theta_k^{n_p,n_c})$ denote the receive and transmit antenna array response with angle of arrival (AoA) $\phi_k^{n_p,n_c}$ and angle of departure (AoD) $\theta_k^{n_p,n_c}$, respectively. The channel matrices $\bm{H}_j$ and $\bm{H}_{j,k}$ can be modelled similarly as in \eqref{channel_model}. The SI channel can be modelled as \cite{satyanarayana2018hybrid}
\begin{equation} \label{SI_Channel}
    \bm{H}_0 = \sqrt{\frac{\kappa}{\kappa+1}} \bm{H}_{LoS} + \sqrt{\frac{1}{\kappa+1}} \bm{H}_{ref},
\end{equation}
where $\kappa$ denotes the Rician factor, and the matrices $\bm{H}_{LoS}$ and $\bm{H}_{ref}$ denote the line-of-sight (LoS) and reflected contributions, respectively. The channel matrix $\bm{H}_{ref}$ can be modelled as \eqref{channel_model} and element of $\bm{H}_{LoS}$ at the $m$-th row and $n$-th column can be modelled as \cite{satyanarayana2018hybrid}
\begin{equation} \label{SI_LOS_model}
    \bm{H}_{LoS}(m,n) = \frac{\rho}{r_{m,n}} e^{-i 2 \pi \frac{r_{m,n}}{\lambda}}.
\end{equation}
where $\rho$ denotes the power normalization constant to assure $\mathbb{E}(||\bm{H}_{LoS}(m,n)||_F^2)=M_0 N_0$ and $\lambda$ denotes the wavelength. The scalar $r_{m,n}$ denotes distance between the $m$-th receive and $n$-th transmit antenna, which depends on the transmit and receive array geometry (9) \cite{satyanarayana2018hybrid}. 
The aforementioned notations are summarized in Table \ref{table_parametri_definiti}.
 
\begin{table}      
\centering   \caption{Notations}
    \resizebox{8cm}{!}{%
    \begin{tabular}{|p{15mm}|p{65mm}|}
        \hline  
        \mbox{$M_{t}$} & Number of transmit RF chains for the BS \\
        \hline
         \mbox{$N_{r}$}  & Number of receive RF chains for the FD BS \\
         \hline
          \mbox{$M_{0}$} & Number of transmit antennas for the BS  \\
          \hline
         \mbox{$N_{0}$} & Number of receive antennas for the BS  \\
         \hline
        \mbox{$M_{k}$} & Number of transmit antennas for UL user $k$\\ 
         \hline
        \mbox{$N_j$} & Number of receive antennas for  DL user $j$ \\ 
         \hline
         \mbox{$\bm{U}_k$} & Digital beamformer for  UL user $k$  \\ 
         \hline
         \mbox{$\bm{V}_j$} & Digital beamformer for  DL user $j$ \\ 
          \hline
         \mbox{$\bm{G}_{RF}$} & Analog beamformer for the FD BS \\
          \hline
           \mbox{$\bm{F}_{RF}$} & Analog combiner for the FD BS \\
          \hline
         \mbox{$\bm{c}_k$} & Transmit LDR noise from UL user $k$  \\ 
          \hline
         \mbox{$\bm{c}_0$} &  Transmit LDR noise from the FD BS  \\ 
          \hline
         \mbox{$\bm{e}_0$} &  Receive LDR noise at the FD BS   \\  
          \hline
         \mbox{$\bm{e}_j$}  &  Receive LDR noise at the DL user $j$ \\ 
          \hline
         \mbox{$\bm{n}_0$} &  Thermal noise at the FD BS \\  
          \hline
         \mbox{$\bm{n}_j$}  & Thermal noise at the DL user $j$ \\  
          \hline
         \mbox{$\bm{H}_0$} & SI channel\\  
          \hline
         \mbox{$\bm{H}_{k}$}  & Channel between the BS and UL user $k$\\
          \hline
         \mbox{$\bm{H}_j$} & Channel between the BS and DL user $j$\\ 
          \hline
         \mbox{$\bm{H}_{j,k}$}  & Cross-interference channel between UL user $k$ and DL user $j$  \\
         \hline
         	$\preceq$  & Element-wise inequality  \\
         \hline
    \end{tabular}}    \label{table_parametri_definiti}
\end{table}
 
\subsection{Problem Formulation}
Let $\overline{k}$ and $\overline{j}$ denote the indices in sets $\mathcal{U}$ and $\mathcal{D}$ without the elements $k$ and $j$, respectively. The received (signal plus) interference and noise covariance matrices from UL user $k \in \mathcal{U}$ at the BS and by the DL user $j \in \mathcal{D}$ are denoted as ($\bm{R}_k$)  $\bm{R}_{\overline{k}}$ and ($\bm{R}_j$) $\bm{R}_{\overline{j}}$, respectively. Let $\bm{T}_k$, $\forall k \in \mathcal{U}$, and $\bm{Q}_j$, $\forall j \in \mathcal{D}$, defined as 
 \begin{subequations}\label{reparametarization}
 \begin{equation}
\bm{T}_k = \bm{U}_k \bm{U}_k^H, 
\end{equation}
\begin{equation}
 \bm{Q}_j = \bm{G}_{RF} \bm{V}_j \bm{V}_j^H \bm{G}_{RF}^H, 
\end{equation}
 \end{subequations}
denote the transmit covariance matrices from UL user $k \in \mathcal{U}$ and DL user $j \in \mathcal{D}$, respectively. By considering the distortions from non-ideal hardware with the extended LDR noise model, cross-interference, interference and SI, the received covariance matrices at the BS after the analog combiner, i.e., $\bm{R}_k$ and $\bm{R}_{\overline{k}}$, and at the DL user $j \in \mathcal{D}$, i.e., $\bm{R}_j$ and  $\bm{R}_{\overline{j}}$, can be written as \eqref{covariance_matrices}, shown at the top of the next page. In \eqref{covariance_matrices}, $\bm{S}_{k}$ and $\bm{S}_{j}$ denote the useful received signal covariance matrices from $k$-th UL user at the FD BS and by $j$-th DL user, respectively. The undistorted received covariance matrices can be recovered from \eqref{covariance_matrices} as $\bm{\Phi}_0 = \bm{R}_k$, with $\beta_0=0$, and $\bm{\Phi}_j = \bm{R}_j$, with $\beta_j=0$.

\begin{figure*}[!t]  
\begin{subequations} \label{covariance_matrices}
\begin{equation}  
\begin{aligned}
\bm{R}_k =   \underbrace{\bm{F}_{RF}^H \bm{H}_k  \bm{T}_k \bm{H}_k^H  \bm{F}_{RF}}_{\triangleq \bm{S}_{k}} + \sum_{\substack{i \in \mathcal{U} \\i\neq k}} \bm{F}_{RF}^H \bm{H}_i \bm{T}_i \bm{H}_i^H \bm{F}_{RF} + \sum_{i \in \mathcal{U}} k_i \bm{F}_{RF}^H  & \bm{H}_i \mbox{diag}\Big(\bm{T}_i\Big) \bm{H}_i^H \bm{F}_{RF}  + \sigma_0^2\bm{I}_{N_0} + \beta_0 \mbox{diag}\Big(\bm{\Phi}_0\Big) \\& + \bm{F}_{RF}^H \bm{H}_0\Big(\sum_{n \in \mathcal{D}}^{}\bm{Q}_n + k_0 \mbox{diag}\Big(\sum_{n \in \mathcal{D}} \bm{Q}_n\Big)\Big) \bm{H}_0^H  \bm{F}_{RF},
\end{aligned} 
\end{equation}
\begin{equation}
\begin{aligned}
\bm{R}_{j} = & \underbrace{\bm{H}_j \bm{Q}_j \bm{H}_j^H}_{\triangleq \bm{S}_j}  + \bm{H}_j \sum_{\substack{n \in \mathcal{D}\\ n \neq j}} \bm{Q}_n \bm{H}_j^H + k_0 \bm{H}_j \mbox{diag}\Big(\sum_{n \in \mathcal{D}} \bm{Q}_n\Big) \bm{H}_j^H +\sigma_j^2 \bm{I}_{N_j} + \sum_{i \in \mathcal{U}} \bm{H}_{j,i} \Big( \bm{T}_i  + k_i \mbox{diag}\Big(\bm{T}_i\Big) \Big) \bm{H}_{j,i}^H  + \beta_j \mbox{diag}\Big(\bm{\Phi}_j \Big)  ,
\end{aligned}
\end{equation}
\begin{equation}
\begin{aligned}
\bm{R}_{\overline{k}} = \bm{R}_k- \bm{S}_k, \quad
\bm{R}_{\overline{j}} = \bm{R}_j - \bm{S}_j.
\end{aligned}
\end{equation}
\end{subequations} \hrulefill
\end{figure*} 

The WSR maximization problem with respect to the digital beamformers, analog beamformer and combiner with amplitudes in $\mathcal{A}$ and phase part in $\mathcal{P}$, under the joint sum-power and per-antenna power constraints, can be stated as

\begin{subequations}\label{problem_statement}
\begin{equation}
    \underset{\substack{\bm{U},\bm{V},\\ \bm{G}_{RF}\bm{F}_{RF}}}{\text{max}} \quad \hspace{-2mm}\sum_{k \in \mathcal{U}} w_k \mbox{lndet}\Big( \bm{R}_{\overline{k}}^{-1} \bm{R}_k \Big) + \sum_{j \in \mathcal{D}} w_j \mbox{lndet}\Big( \bm{R}_{\overline{j}}^{-1} \bm{R}_j\Big)
\end{equation} \label{WSR}
\begin{equation}
\text{s.t.} \quad \mbox{diag}\Big( \bm{U}_k \bm{U}_k^H \Big) 	\preceq \bm{\Lambda_k}, \quad \forall k \in \mathcal{U}, \label{c1}
\end{equation}
\begin{equation}
 \quad \quad  \quad \quad \mbox{diag}{\Big(  \sum_{j \in \mathcal{D}}  \bm{G}_{RF} \bm{V}_j  \bm{V}_j^H \bm{G}_{RF}^H\Big) 	\preceq \bm{\Lambda}_0},\label{c2}
\end{equation}
\begin{equation}
 \quad \quad  \mbox{Tr}\Big(\bm{U}_k \bm{U}_k^H \Big) \leq  \alpha_k,  \quad  \forall k \in \mathcal{U}, \label{c3}
\end{equation}
\begin{equation}
\quad \quad \mbox{Tr} \Big(\sum_{j \in \mathcal{D}} \bm{G}_{RF} \bm{V}_j  \bm{V}_j^H \bm{G}_{RF}^H \Big) \leq  \alpha_0. \label{c4}
\end{equation}
\begin{equation}
 \quad \quad  \quad  \angle \bm{G}_{RF}(m,n) \in \mathcal{P}, \;\mbox{and}\; |\bm{G}_{RF}(m,n)| \in \mathcal{A}, \; \forall \; m,n, \label{c6}
\end{equation} 
\begin{equation}
\quad \quad  \quad  \angle \bm{F}_{RF}(i,j) \in \mathcal{P}, \;\mbox{and}\;|\bm{F}_{RF}(i,j)| \in \mathcal{A}, \quad  \forall \; i,j. \label{c7}
\end{equation}
\end{subequations}
The scalars $w_k$ and $w_j$ denote rate weights for the UL user $k$ and DL user $j$, respectively. The diagonal matrices $\bm{\Lambda_k}$ and $\bm{\Lambda_0}$ denote per-antenna power constraints for the $k$-th UL user and FD BS, respectively, and the scalars $\alpha_k$ and $\alpha_0$ denote their sum-power constraint.
The collections of digital UL and DL beamformers are denoted as $\bm{U}$ and $ \bm{V}$, respectively. For unit-modulus HYBF, the constraints in $\eqref{c6}-\eqref{c7}$ on the amplitude part become unit-modulus.

\emph{Remark 1:} Note that the rate achieved with \eqref{WSR} is not affected by the digital receivers if they are chosen as the MMSE combiners, see e.g., $(4)-(9)$ \cite{christensen2008weighted} for more details. For WSR maximization, only the analog combiner has to considered in the optimization problem as it affects the size of the received covariance matrices from UL users, i.e., the UL rate.

 \vspace{-0.6mm}

\section{Minorization-Maximization} \label{simplificazione_problem}
Problem \eqref{WSR} is non-concave in the transmit covariance matrices $\bm{T}_k $ and $\bm{Q}_j$ due to the interference terms and searching its globally optimum solution is very challenging. In this section, we present the minorization-maximization optimization method \cite{StoicaSelen:SPmag0104} for solving \eqref{WSR} to a local optimum.

The WSR maximization problem \eqref{WSR} will be reformulated at each iteration as a concave reformulation with its minorizer, using the difference-of-convex (DC) programming \cite{kim2011optimal} in terms of the variable to be updated, while the other variables will be kept fixed. To proceed, note that the WSR in \eqref{WSR} can be written with the weighted-rate (WR) of user $k \in \mathcal{U}$, user $j \in \mathcal{D}$, WSRs for $\overline{k}$ and $\overline{j}$ as

\begin{equation}
    \mbox{WSR} = \underbrace{\mbox{WR}_{k}^{UL} + \mbox{WSR}_{\overline{k}}^{UL}}_{\triangleq \mbox{WSR}^{UL}} + \underbrace{\mbox{WR}_{j}^{DL}+\mbox{WSR}_{\overline{j}}^{DL}}_{\triangleq \mbox{WSR}^{DL}},
\label{eqWSR}
\end{equation}
where $\mbox{WSR}^{UL}$ and $\mbox{WSR}^{DL}$ denote the WSR in UL and DL, respectively. 
Considering the dependence of the transmit covariance matrices, only $\mbox{WR}_{k}^{UL}$ is concave in $\bm{T}_k$, meanwhile $\mbox{WSR}_{\overline{k}}^{UL}$ and $\mbox{WSR}^{DL}$ are non-concave in $\bm{T}_k$, when $\bm{T}_{\overline{k}}$ and $\bm{Q}_{j}$, $\forall j \in \mathcal{D}$, are fixed. 
Similarly, only $\mbox{WSR}_{j}^{DL}$ is concave in $\bm{Q}_j$ and non-concave in $\mbox{WSR}_{\overline{j}}^{DL}$ and $\mbox{WSR}^{UL}$, when $\bm{Q}_{\overline{j}}$ and $\bm{T}_{k}$, $\forall k \in \mathcal{U}$, are fixed. Since a linear function is simultaneously convex and concave, DC programming introduces the first order
Taylor series expansion of $\mbox{WSR}_{\overline{k}}^{UL}$ and $\mbox{WSR}^{DL}$ in $\bm{T}_k$, around $\hat{\bm{T}}_k$ (i.e. around all $\bm{T}_k$), and of $\mbox{WSR}_{\overline{j}}^{DL}$ and $\mbox{WSR}^{UL}$ in $\bm{Q}_j$, around $\hat{\bm{Q}}_j$ (i.e. around all $\bm{Q}_j$). Let $\hat{\bm{T}}$ and $\hat{\bm{Q}}$ denote the set containing all such $\hat{\bm{T}}_k$ and $\hat{\bm{Q}}_j$, respectively. Let $\hat{\bm{R}_{k}}(\bm{\hat{T},\hat{Q}})$, $\hat{\bm{R}_{\overline{k}}}(\bm{\hat{T},\hat{Q}})$, $\hat{\bm{R}_{j}}(\bm{\hat{T},\hat{Q}})$, and $\hat{\bm{R}_{\overline{j}}}(\bm{\hat{T},\hat{Q}})$ denote the covariance matrices $\bm{R}_{k},\bm{R}_{\overline{k}}, \bm{R}_{j}$ and $\bm{R}_{\overline{j}}$ as a function of $\hat{\bm{T}}$ and $\hat{\bm{Q}}$, respectively. The linearized tangent expressions for each communication link by computing the gradients 
\begin{subequations}
\begin{equation}
    \hat{\bm{A}}_{k} = - \frac{\partial \mbox{WSR}_{\overline{k}}^{UL}}{\partial \bm{T}_k}\Bigr|_{\hat{\bm{T}},\hat{\bm{Q}}}
    ,\quad   \hat{\bm{B}}_{k}  = - \frac{\partial \mbox{WSR}^{DL}}{\partial \bm{T}_k}\Bigr|_{\hat{\bm{T}},\hat{\bm{Q}}}, 
    \label{grad_UL}
\end{equation}
\begin{equation}
    \hat{\bm{C}}_{j}  = - \frac{\partial \mbox{WSR}_{\overline{j}}^{DL}}{\partial \bm{Q}_j} \Bigr|_{\hat{\bm{T}},\hat{\bm{Q}}}
    , \quad
    \hat{\bm{D}}_{j} = - \frac{\partial \mbox{WSR}^{UL}}{\partial \bm{Q}_j}\Bigr|_{\hat{\bm{T}},\hat{\bm{Q}}},
    \label{grad_DL}
\end{equation}
\end{subequations}
with respect to the transmit covariance matrices $\bm{T}_k$ and $\bm{Q}_j$ can be written as
\begin{subequations}
\begin{equation}
   \underline{\mbox{WSR}}_{\overline{k}}^{UL}\Big(\bm{T}_k,\hat{\bm{T}},\hat{\bm{Q}}\Big) = \mbox{WSR}_{\overline{k}}^{UL}(\hat{\bm{T}},\hat{\bm{Q}}) 
     - \mbox{Tr}\Big(\Big( \bm{T_k} - \hat{\bm{T}}_k\Big) \hat{\bm{A}}_k\Big),
     \label{eqAk}
\end{equation}
\begin{equation}
 \underline{\mbox{WSR}}^{DL}\Big(\bm{T}_k,\hat{\bm{T}},\hat{\bm{Q}}\Big) = \mbox{WSR}^{DL}(\hat{\bm{T}},\hat{\bm{Q}})
     -\mbox{Tr}\Big(\Big(\bm{T_k} - \hat{\bm{T}}_k\Big) \hat{\bm{B}}_k\Big),
          \label{eqBk}
\end{equation}
\begin{equation}
       \underline{\mbox{WSR}}_{\overline{j}}^{DL}\Big(\bm{Q}_j,\hat{\bm{Q}},\hat{\bm{T}}\Big) = \mbox{WSR}_{\overline{j}}^{DL}(\hat{\bm{T}},\hat{\bm{Q}}) 
       - \mbox{Tr}\Big(\Big( \bm{Q_j} - \hat{\bm{Q}}_j\Big) \hat{\bm{C}}_j\Big),
     \label{eqCk}
\end{equation}
\begin{equation}
 \underline{\mbox{WSR}}^{UL}\Big(\bm{Q}_j,\hat{\bm{Q}},\hat{\bm{T}}\Big)=\mbox{WSR}^{UL}(\hat{\bm{T}},\hat{\bm{Q}}) 
    - \mbox{Tr}\Big(\Big( \bm{Q}_j - \hat{\bm{Q}}_j \Big) \hat{\bm{D}}_j\Big).
         \label{eqDk}
\end{equation}
\end{subequations}

\begin{figure*}[!t] 
\begin{subequations}  
\begin{equation}
\begin{aligned}
\hat{\bm{A}}_{k} = \sum_{\substack{i \in \mathcal{U},i \neq k}}  w_i \Big( \bm{H}_k^H \bm{F}_{RF}  \Big[  \hat{\bm{R}_{\overline{i}}}(\bm{\hat{T},\hat{Q}})^{-1}\hspace{-2mm} - \hat{\bm{R}_i}(\bm{\hat{T},\hat{Q}})^{-1}  &  - \beta_0 \; \mbox{diag}\Big(\hat{\bm{R}_{\overline{i}}}(\bm{\hat{T},\hat{Q}})^{-1}  - \hat{\bm{R}_i}(\bm{\hat{T},\hat{Q}})^{-1} \Big) \Big] \bm{F}_{RF}^H \bm{H}_k \\& - k_i\; \mbox{diag}\Big(\bm{H}_k^H \bm{F}_{RF} \Big( \hat{\bm{R}_{\overline{i}}}(\bm{\hat{T},\hat{Q}})^{-1} - \hat{\bm{R}_{i}}(\bm{\hat{T},\hat{Q}})^{-1} \Big) \bm{F}_{RF}^H \bm{H}_k \Big) \Big),
\end{aligned}
\end{equation} 
\begin{equation}
\begin{aligned}
\hat{\bm{B}}_{k} =  \sum_{\substack{l \in \mathcal{D}}}w_l \Big( \bm{H}_{l,k}^H  \Big[\hat{\bm{R}_{\overline{l}}}(\bm{\hat{T},\hat{Q}})^{-1}   -  \hat{\bm{R}_l}(\bm{\hat{T},\hat{Q}})^{-1}  & - \beta_j\; \mbox{diag}\Big(\hat{\bm{R}_{\overline{l}}}(\bm{\hat{T},\hat{Q}})^{-1} - \hat{\bm{R}_l}(\bm{\hat{T},\hat{Q}})^{-1}\Big) \Big]  \bm{H}_{l,k} \\& - k_k\; \mbox{diag}\Big(\bm{H}_{l,k}^H \Big( \hat{\bm{R}_{\overline{l}}}(\bm{\hat{T},\hat{Q}})^{-1} - \bm{R}_{l}(\bm{\hat{T},\hat{Q}})^{-1}\Big)  \bm{H}_{l,k}\Big)\Big),
\end{aligned}
\end{equation} 
\begin{equation} 
\begin{aligned}
\hat{\bm{C}}_{j} = \sum_{\substack{n \in \mathcal{D}, n \neq j}} w_n \Big(\bm{H}_n^H  \Big[\hat{\bm{R}}_{\overline{n}}(\bm{\hat{T},\hat{Q}})^{-1} - \hat{\bm{R}}_n(\bm{\hat{T},\hat{Q}})^{-1} & -   \beta_n\; \mbox{diag}\Big(\hat{\bm{R}_{\overline{n}}}(\bm{\hat{T},\hat{Q}})^{-1}  - \hat{\bm{R}_n}(\bm{\hat{T},\hat{Q}})^{-1}\Big)\Big]\bm{H}_n  \\& - k_0\; \mbox{diag}\Big(\bm{H}_n^H  (\hat{\bm{R}}_{\overline{n}}(\bm{\hat{T},\hat{Q}})^{-1} - \hat{\bm{R}}_{n}(\bm{\hat{T},\hat{Q}})^{-1}\Big)  \bm{H}_n \Big),
\end{aligned}
\end{equation} 
\begin{equation} 
\begin{aligned}
\hat{\bm{D}}_{j} =  \sum_{\substack{m \in \mathcal{U}}}  w_m \Big( \bm{H}_0^H \bm{F}_{RF} \Big[ \hat{\bm{R}}_{\overline{m}}(\bm{\hat{T},\hat{Q}})^{-1}  - \hat{\bm{R}}_m(\bm{\hat{T},\hat{Q}})^{-1} & - \beta_0\; \mbox{diag}\Big(\hat{\bm{R}}_{\overline{m}}(\bm{\hat{T},\hat{Q}})^{-1}  - \hat{\bm{R}}_m(\bm{\hat{T},\hat{Q}})^{-1}\Big) \Big] \bm{F}_{RF}^H \bm{H}_0   \\&- k_0\; \mbox{diag}\Big(\bm{H}_0^H \bm{F}_{RF}   \Big(\hat{\bm{R}_{\overline{m}}}(\bm{\hat{T},\hat{Q}})^{-1} - \hat{\bm{R}}_m(\bm{\hat{T},\hat{Q}})^{-1}\Big) \bm{F}_{RF}^H  \bm{H}_0\Big) \Big),
\end{aligned}
\end{equation} \label{gradients}
\end{subequations} \hrulefill
\end{figure*}
We remark that the tangent expressions \eqref{eqAk}-\eqref{eqDk}  constitute a touching lower bound for $\mbox{WSR}_{\overline{k}}^{UL},\mbox{WSR}_{\overline{j}}^{DL}, \mbox{WSR}^{DL}$ and $\mbox{WSR}^{UL}$, respectively. Hence, the DC programming approach is also a minorization-maximization approach, regardless of the restatement of the transmit covariance matrices $\bm{T}_k$ and $\bm{Q}_j$ as a function of the beamformers.  
\begin{thm} \label{thm_grad}
The gradients $\hat{\bm{A}}_{k}$ and $\hat{\bm{B}}_{k}$ which linearize $\mbox{WSR}_{\overline{k}}^{UL}$ and $\mbox{WSR}^{DL}$, respectively, with respect to $\bm{T}_k$,  $\forall k \in \mathcal{U}$, and the gradients $\hat{\bm{C}}_{j}$ and $\hat{\bm{D}}_{j}$ which linearize $\mbox{WSR}_{\overline{j}}^{DL}$ and $\mbox{WSR}^{UL}$, respectively, with respect to $\bm{Q}_j$, $\forall j \in \mathcal{D}$, with the first order Taylor series expansion are given in \eqref{gradients}. 
\end{thm}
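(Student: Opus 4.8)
The plan is to treat each of the four weighted-sum-rate aggregates as a sum of individual rate terms $w_i\,\mbox{ln det}(\bm{R}_{\overline{i}}^{-1}\bm{R}_i)=w_i(\mbox{ln det}\,\bm{R}_i-\mbox{ln det}\,\bm{R}_{\overline{i}})$ and to differentiate them term-by-term with the standard matrix identity $d\,\mbox{ln det}\,\bm{R}=\mbox{Tr}(\bm{R}^{-1}d\bm{R})$. Since each covariance in \eqref{covariance_matrices} is \emph{linear} in the transmit covariances, the gradient with respect to $\bm{T}_k$ (resp. $\bm{Q}_j$) is obtained by collecting the differential into the canonical form $\mbox{Tr}((d\bm{T}_k)\,\bm{M})$ and reading off $\partial/\partial\bm{T}_k=\bm{M}$, all evaluated at the linearization point $(\hat{\bm{T}},\hat{\bm{Q}})$.

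First, for $\hat{\bm{A}}_k$ I would isolate how each $\bm{R}_i$, $\bm{R}_{\overline{i}}$ (for $i\in\mathcal{U}$, $i\neq k$) depends on $\bm{T}_k$. From \eqref{covariance_matrices}, $\bm{T}_k$ enters $\bm{R}_i$ only through the inter-user interference block $\bm{F}_{RF}\bm{H}_k\bm{T}_k\bm{H}_k^H\bm{F}_{RF}^H$, the transmit-LDR block $k_k\bm{F}_{RF}\bm{H}_k\,\mbox{diag}(\bm{T}_k)\bm{H}_k^H\bm{F}_{RF}^H$, and the receive-LDR block $\beta_0\,\mbox{diag}(\bm{\Phi}_0)$, whose undistorted covariance $\bm{\Phi}_0$ itself carries the first two dependencies. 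Crucially, the useful-signal block $\bm{S}_i$ does not involve $\bm{T}_k$, so $\bm{R}_{\overline{i}}=\bm{R}_i-\bm{S}_i$ has exactly the same $\bm{T}_k$-dependence as $\bm{R}_i$; this is precisely what forces the inverse-covariance factors to appear only in the combination $\hat{\bm{R}}_{\overline{i}}(\hat{\bm{T}},\hat{\bm{Q}})^{-1}-\hat{\bm{R}}_i(\hat{\bm{T}},\hat{\bm{Q}})^{-1}$.

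The second step is the bookkeeping that turns $\mbox{Tr}(\bm{R}_i^{-1}d\bm{R}_i)$ into the stated form. Two manipulations do the work: the cyclic property of the trace, which moves the factors $\bm{F}_{RF}\bm{H}_k$ to the outside of the inverse-covariance difference and yields the sandwich $\bm{H}_k^H\bm{F}_{RF}^H(\cdots)\bm{F}_{RF}\bm{H}_k$; and the self-adjointness of the $\mbox{diag}(\cdot)$ operator with respect to the trace inner product, namely $\mbox{Tr}(\bm{A}\,\mbox{diag}(\bm{X}))=\mbox{Tr}(\mbox{diag}(\bm{A})\bm{X})$, equivalently $\partial\,\mbox{Tr}(\bm{A}\,\mbox{diag}(\bm{X}))/\partial\bm{X}=\mbox{diag}(\bm{A})$. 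The latter identity, applied once to the $k_k$-weighted transmit-LDR block and once to the $\beta_0\,\mbox{diag}(\bm{\Phi}_0)$ block, is exactly what produces the outer $\mbox{diag}(\cdot)$ contributions in \eqref{gradients}. Finally, the overall minus sign in the definitions \eqref{grad_UL}–\eqref{grad_DL} flips $\bm{R}_i^{-1}-\bm{R}_{\overline{i}}^{-1}$ into $\bm{R}_{\overline{i}}^{-1}-\bm{R}_i^{-1}$, matching the sign convention of the theorem.

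The remaining three gradients follow the identical recipe; only the channel through which the differentiated variable couples into the covariances changes. For $\hat{\bm{B}}_k$ one collects the downlink covariances $\bm{R}_l,\bm{R}_{\overline{l}}$ ($l\in\mathcal{D}$), in which $\bm{T}_k$ enters through the cross-interference channel $\bm{H}_{l,k}$, giving the sandwich $\bm{H}_{l,k}^H(\cdots)\bm{H}_{l,k}$ and the $k_k$-LDR term. For $\hat{\bm{C}}_j$ one differentiates $\mbox{WSR}_{\overline{j}}^{DL}$ in $\bm{Q}_j$, which couples through the direct downlink channels $\bm{H}_n$ and the shared transmit-LDR block $k_0\,\mbox{diag}(\sum_n\bm{Q}_n)$. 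For $\hat{\bm{D}}_j$ one differentiates $\mbox{WSR}^{UL}$ in $\bm{Q}_j$, which couples through the self-interference channel $\bm{H}_0$ and the analog combiner $\bm{F}_{RF}$. In all cases the desired-signal blocks $\bm{S}_i,\bm{S}_l,\bm{S}_m$ are independent of the differentiated variable and drop out, leaving only the difference of inverse covariances. I expect the main obstacle to be precisely the careful treatment of the LDR terms: one must remember that $\bm{\Phi}_0,\bm{\Phi}_j$ are themselves functions of the optimization variable, apply the self-adjointness of $\mbox{diag}(\cdot)$ correctly to both the transmit- and receive-LDR contributions, and verify the cancellation of the signal blocks. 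Everything else is routine matrix calculus combined with trace cyclicity.
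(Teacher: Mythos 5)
Your proposal is correct and takes essentially the same route as the paper: the paper simply packages the calculus you describe (the differential $d\,\mbox{ln det}\,\bm{Y}=\mbox{Tr}(\bm{Y}^{-1}d\bm{Y})$, trace cyclicity, and the self-adjointness $\mbox{Tr}(\bm{A}\,\mbox{diag}(\bm{X}))=\mbox{Tr}(\mbox{diag}(\bm{A})\bm{X})$, which it proves via Hadamard-product commutativity) into a general lemma for covariances of the form $\bm{Y}=\bm{A}\bm{X}\bm{A}^H+a\,\bm{A}\,\mbox{diag}(\bm{X}+\bm{Q})\bm{A}^H+b\,\mbox{diag}(\bm{C}\bm{X}\bm{C}^H+\bm{E})+\bm{F}$ and then applies it repeatedly to each $\bm{R}_i,\bm{R}_{\overline{i}}$, exactly as you do inline. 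Your key observations --- that the desired-signal blocks are independent of the differentiated variable so only the difference $\hat{\bm{R}}_{\overline{i}}^{-1}-\hat{\bm{R}}_i^{-1}$ survives, that $\bm{\Phi}_0,\bm{\Phi}_j$ carry the variable into the receive-LDR terms, and that the minus sign in \eqref{grad_UL}--\eqref{grad_DL} fixes the sign convention --- are precisely the ingredients of the paper's proof.
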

\begin{proof}
Please see Appendix \ref{grad_proof}.
\end{proof}

\begin{figure*}[!t] 
\begin{equation} 
\begin{aligned}
& \underset{\substack{\bm{U},\bm{V} \\\bm{G}_{RF}\bm{F}}_{RF}}{\text{max}} \quad \hspace{-2mm}\sum_{k \in \mathcal{U}} w_k \mbox{lndet}\Big(\bm{I} + \bm{U}_k^H  \bm{H}_k^H \bm{F}_{RF} \bm{R}_{\overline{k}}^{-1} \bm{F}_{RF}^H \bm{H}_k \bm{U}_k  \Big) - \mbox{Tr} \Big( \bm{U}_k^H \Big(\hat{\bm{A}}_{k} + \hat{\bm{B}}_{k}\Big) \bm{U}_k  \Big)+\\& \quad \quad \quad \sum_{j \in \mathcal{D}} w_j  
 \mbox{lndet}\Big( \bm{I} +  \bm{V}_j^H \bm{G}_{RF}^H \bm{H}_j^H  \bm{R}_{\overline{j}}^{-1} \bm{H}_j \bm{G}_{RF} \bm{V}_j \Big) -  \mbox{Tr}\Big( \bm{V}_j^H   \bm{G}_{RF}^H \Big(\hat{\bm{C}}_{j} + \hat{\bm{D}}_{j}\Big) \bm{G}_{RF} \bm{V}_j \Big) \\
 & \quad \quad \quad \quad  \quad \quad \quad \quad   \quad \quad  \quad \quad  \quad \quad \text{s.t.} \quad \quad \eqref{c1}-\eqref{c7} 
\end{aligned} \label{WSR_convex}
\end{equation}  \hrulefill
\end{figure*}
\vspace{-3mm}
\subsection{Concave Reformulation}
In this section, we simplify the non-concave WSR maximization problem \eqref{WSR}. By using the gradients \eqref{gradients}, \eqref{WSR} can be reformulated as \eqref{WSR_convex}, given at the top of the next page.

\begin{lemma}
The WSR maximization problem \eqref{WSR} for a single-cell mmWave FD system with multi-antenna users reformulated at each iteration with its first-order Taylor series expansion as in \eqref{WSR_convex} is a concave reformulation for each link.
\end{lemma}
\begin{proof}
The optimization problem \eqref{WSR} restated as in \eqref{WSR_convex} for each link is made of a concave part, i.e., log($\cdot$), and a linear part, i.e., $\mbox{Tr}(\cdot)$. Since a linear function is simultaneously concave and non-concave, \eqref{WSR_convex} results to be concave for each link.
\end{proof}

\emph{Remark 2:} The problem \eqref{WSR} and its reformulated version \eqref{WSR_convex} have the same Karush–Kuhn–Tucker (KKT) conditions and therefore any sub-optimal (optimal) solution of \eqref{WSR_convex} is also sub-optimal (optimal) for \eqref{WSR}. 

Let $\bm{\Psi}_0 = \mbox{diag}([\psi_1,...,\psi_{M_0}])$ and $\bm{\Psi}_k = \mbox{diag}([\psi_{k,1},...,\psi_{k,M_k}])$, denote diagonal matrices containing the Lagrange multipliers associated with per-antenna power constraints for the FD BS and UL user $k$, respectively. Let $l_0$ and $l_1 ,...,l_K$ denote the Lagrange multipliers associated with the sum-power constraint for FD BS and $K$ UL users, respectively. 
Let $\bm{\Psi}$ denote the collection of Lagrange multipliers associated with the per-antenna power constraints, i.e., $\bm{\Psi}_0$ and $\bm{\Psi}_k, \forall k \in \mathcal{U}$. Let $\bm{L}$ denote the collection of Lagrange multipliers associated with the sum-power constraints.
Augmenting the linearized WSR maximization problem \eqref{WSR_convex} with the sum-power and practical per-antenna power constraints, yields the Lagrangian \eqref{AUG_Largrangian}, given at the top of this page. In \eqref{AUG_Largrangian}, unconstrained analog beamformer and combiner are assumed and their constraints will be  incorporated later.

\begin{figure*}[!t]
\begin{equation} 
\begin{aligned}
 \mathcal{L}(\bm{U}, &\bm{V},\bm{G}_{RF},\bm{F}_{RF},\bm{\Psi},\bm{L})= \sum_{l=0}^{K} l_l \alpha_l + \mbox{Tr}\Big(\bm{\Psi_0 \bm{\Lambda_0}}\Big) + \sum_{u \in \mathcal{U}} \mbox{Tr}\Big(\bm{\Psi_u \bm{\Lambda_u}}\Big) \\
 &+ \sum_{k \in \mathcal{U}} w_k\mbox{lndet}\Big(\bm{I} + \bm{U}_k^H  \bm{H}_k^H \bm{F}_{RF}  \bm{R}_{\overline{k}}^{-1}  \bm{F}_{RF}^H \bm{H}_k \bm{U}_k  \Big) -  \mbox{Tr}\Big( \bm{U}_k^H \Big(\hat{\bm{A}}_{k} + \hat{\bm{B}}_{k} + l_k \bm{I} + \bm{\Psi}_k \Big) \bm{U}_k  \Big) \\
 & + \sum_{j\in \mathcal{D}} w_j \mbox{lndet}\Big(\bm{I} +  \bm{V}_j^H \bm{G}_{RF}^H \bm{H}_j^H  \bm{R}_{\overline{j}}^{-1} \bm{H}_j \bm{G}_{RF} \bm{V}_j \Big)  - \mbox{Tr}\Big(\bm{V}_j^H \bm{G}_{RF}^H \Big(\hat{\bm{C}}_{j} + \hat{\bm{D}}_{j} +  l_0 \bm{I} +  \bm{\Psi}_0 \Big) \bm{G}_{RF} \bm{V}_j \Big)
\end{aligned} \label{AUG_Largrangian} 
\end{equation}  \hrulefill
\end{figure*}

\section{Hybrid Beamforming and Combining} \label{sec_hybf}  

This section presents a novel HYBF design for a multi-user mmWave mMIMO FD system based on alternating optimization. In the following, optimization of the digital beamformers, analog beamformer and analog combiner is presented into separate sub-sections. We will assume the other variables to be fixed during the alternating optimization process while updating one variable. Information of the other variables updated during previous iterations will be captured in the gradients.

\subsection{Digital Beamforming}
To optimize the digital beamformers, we
take the derivative of \eqref{AUG_Largrangian} with respect to the conjugate of $\bm{U}_k$ and $\bmV_j$, which leads to the following KKT conditions
\begin{subequations} \label{grad_precoder_UL} 
\begin{equation}
\begin{aligned}
    \bm{H}_k^H & \bm{F}_{RF}\bm{R}_{\overline{k}}^{-1}  \bm{F}_{RF}^H \bm{H}_k \bm{U}_k \Big(\bm{I} + \bm{U}_k^H  \bm{H}_k^H \bm{F}_{RF} \bm{R}_{\overline{k}}^{-1}  \bm{F}_{RF}^H \\& \bm{H}_k \bm{U}_k \Big)^{-1}  -  \Big( \hat{\bm{A}}_{k} + \hat{\bm{B}}_{k}  + \bm{\Psi}_k + l_k \bm{I} \Big) \bm{U}_k = 0,
\end{aligned} 
\end{equation}  

\begin{equation}
\begin{aligned}
 & \bm{G}_{RF}^H \bm{H}_j^H \bm{R}_{\overline{j}}^{-1}  \bm{H}_j \bm{G}_{RF} \bm{V}_j \Big( \bm{I} +  \bm{V}_j^H \bm{G}_{RF}^H \bm{H}_j^H  \bm{R}_{\overline{j}}^{-1}   \bm{H}_j \bm{G}_{RF}  \\& \bm{V}_j \Big)^{-1}  -  \bm{G}_{RF}^H \Big( \hat{\bm{C}}_{j} + \hat{\bm{D}}_{j}  + \bm{\Psi}_0 + l_0 \bm{I}\Big) \bm{G}_{RF} \bm{V}_j = 0. 
\end{aligned} \label{grad_precoder_DL} 
\end{equation}
\end{subequations}

Given \eqref{grad_precoder_UL}-\eqref{grad_precoder_DL}, the digital beamformers can be optimized based on the result stated in the following.

\begin{thm}\label{optimal_digital_precoding}
Digital beamformers $\bm{U}_k$ and $\bm{V}_j$, fixed the other variables, can be optimized as the generalized dominant eigenvector solution of the pair of the following matrices 
\begin{subequations}
\begin{equation}
    \begin{aligned}
        \bm{U}_k = \bm{D}_{u_k}\Big( \bm{H}_k^H & \bm{F}_{RF} \bm{R}_{\overline{k}}^{-1}  \bm{F}_{RF}^H \bm{H}_k,\; \hat{\bm{A}}_{k} + \hat{\bm{B}}_{k}  + \bm{\Psi}_k + l_k \bm{I}\Big)
    \end{aligned} \label{gev_digital_UL_2}
\end{equation}
\begin{equation}
 \begin{aligned}
\bm{V}_j =\bm{D}_{v_j}\Big( \bm{G}_{RF}^H \bm{H}_j^H   \bm{R}_{\overline{j}}^{-1} \bm{H}_j \bm{G}_{RF}, \; \bm{G}_{RF}^H & \Big( \hat{\bm{C}}_{j} +  \hat{\bm{D}}_{j}  + \bm{\Psi}_0 \\& + l_0 \bm{I}\Big) \bm{G}_{RF} \Big),
\end{aligned} \label{gev_digital_DL}
\end{equation}
\end{subequations}
where $\bm{D}_{d}(\bm{X})$ selects $d$ generalized dominant eigenvectors from matrix $\bm{X}$.
\end{thm}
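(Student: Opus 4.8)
The plan is to start from the stationarity conditions \eqref{grad_precoder_uplink} and \eqref{grad_precoder_downlink} and recast each as a generalized eigenvalue problem whose dominant solutions coincide with the claimed beamformers. For the uplink I would abbreviate the two Hermitian matrices forming the pair as $\bm{M}_k \triangleq \bm{H}_k^H \bm{F}_{RF}^H \bm{R}_{\overline{k}}^{-1} \bm{F}_{RF} \bm{H}_k$ and $\bm{N}_k \triangleq \hat{\bm{A}}_k + \hat{\bm{B}}_k + \bm{\Psi}_k + l_k \bm{I}$, so that \eqref{grad_precoder_uplink} reads $\bm{M}_k \bm{U}_k (\bm{I} + \bm{U}_k^H \bm{M}_k \bm{U}_k)^{-1} = \bm{N}_k \bm{U}_k$, with the symmetric abbreviation $\bm{M}_j \triangleq \bm{G}_{RF}^H \bm{H}_j^H \bm{R}_{\overline{j}}^{-1} \bm{H}_j \bm{G}_{RF}$ and $\bm{N}_j \triangleq \bm{G}_{RF}^H(\hat{\bm{C}}_j + \hat{\bm{D}}_j + \bm{\Psi}_0 + l_0 \bm{I}) \bm{G}_{RF}$ handling the downlink.

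Next I would exhibit a solution of the postulated form $\bm{U}_k = \bm{E}_k \bm{P}_k^{1/2}$, where the columns of $\bm{E}_k$ are generalized eigenvectors of the pair $(\bm{M}_k, \bm{N}_k)$, normalized so that $\bm{E}_k^H \bm{N}_k \bm{E}_k = \bm{I}$ and $\bm{E}_k^H \bm{M}_k \bm{E}_k = \bm{\Sigma}_k$ is diagonal with the generalized eigenvalues, and where $\bm{P}_k$ is a diagonal power matrix. Substituting into the KKT identity and using $\bm{M}_k \bm{E}_k = \bm{N}_k \bm{E}_k \bm{\Sigma}_k$, both quadratic forms $\bm{U}_k^H \bm{M}_k \bm{U}_k$ and $\bm{U}_k^H \bm{N}_k \bm{U}_k$ become diagonal, so the matrix equation decouples into the per-stream scalar relations $\sigma_{k,i} p_{k,i}^{1/2}/(1+\sigma_{k,i} p_{k,i}) = p_{k,i}^{1/2}$. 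These are satisfied either by $p_{k,i}=0$ or by $p_{k,i}=1-1/\sigma_{k,i}$, so only streams with $\sigma_{k,i}>1$ carry power; the identical computation handles \eqref{grad_precoder_downlink}.

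Finally I would argue that among all stationary points the maximizer of the concave reformulation \eqref{WSR_convex} selects the $u_k$ (respectively $v_j$) largest generalized eigenvalues. Substituting the diagonalized quadratic forms back into the $\mbox{ln det}(\cdot)$ and $\mbox{Tr}(\cdot)$ terms reduces each per-user contribution to $\sum_i \big[ w_k \ln(1+\sigma_{k,i} p_{k,i}) - p_{k,i} \big]$, which is monotonically increasing in $\sigma_{k,i}$; hence retaining the eigenvectors of largest generalized eigenvalue is optimal, and this is exactly the selection performed by $\bm{D}_{u_k}(\cdot)$ and $\bm{D}_{d_j}(\cdot)$ in \eqref{gev_digital_UL_2}--\eqref{gev_digital_DL}.

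The main obstacle is the coupling introduced by $\bm{U}_k^H \bm{M}_k \bm{U}_k$ inside the matrix inverse: unlike a plain Rayleigh-quotient problem, the stationarity condition is not linear in $\bm{U}_k$, and one must verify that the generalized eigenbasis simultaneously diagonalizes both quadratic forms so that the inverse acts coordinate-wise. Checking that this basis is compatible with the per-antenna Lagrange terms $\bm{\Psi}_k$, $\bm{\Psi}_0$ (which are absorbed into $\bm{N}_k$, $\bm{N}_j$ and thus held fixed while the eigenproblem is solved) is the delicate bookkeeping step, after which the scalarization renders the dominant-eigenvector selection transparent.
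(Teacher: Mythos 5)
Your route is genuinely different from the paper's. The paper never manipulates the stationarity conditions directly: it takes the per-user subproblem \eqref{uplink_proof}, performs a Cholesky whitening $\hat{\bm{A}}_k+\hat{\bm{B}}_k+l_k\bm{I}+\bm{\Psi}_k=\bm{L}_k\bm{L}_k^H$ with the change of variables $\tilde{\bm{U}}_k=\bm{L}_k^H\bm{U}_k$, eigendecomposes the whitened channel matrix, and invokes Hadamard's inequality to conclude that the \emph{global} maximizer must make the transformed variable $\bm{O}_k$ diagonal; the generalized eigenvector relation \eqref{solution_gev} then falls out, and the downlink case is handled by the same reduction. You instead verify that the ansatz $\bm{U}_k=\bm{E}_k\bm{P}_k^{1/2}$ satisfies the KKT identity \eqref{grad_precoder_uplink}, scalarize to the per-stream powers $p_{k,i}=1-1/\sigma_{k,i}$, and compare objective values within the family to justify keeping the dominant eigenvalues. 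Your computations are correct: simultaneous diagonalization $\bm{E}_k^H\bm{N}_k\bm{E}_k=\bm{I}$, $\bm{E}_k^H\bm{M}_k\bm{E}_k=\bm{\Sigma}_k$ is always available for Hermitian $\bm{M}_k$ and positive definite $\bm{N}_k$, so the ``delicate bookkeeping'' you flag is standard, and your monotonicity argument matches the paper's implicit dominant-selection step. (If the weight $w_k$ is carried through the KKT the powers become $p_{k,i}=w_k-1/\sigma_{k,i}$; your unweighted version follows the paper's printed condition.) What the paper's whitening/Hadamard argument buys is a direct characterization of the maximizer; what yours buys is an explicit per-stream power solution, which dovetails nicely with the paper's separate power-allocation step \eqref{optimal_pow}.

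There is one genuine gap. Since \eqref{uplink_proof} is \emph{not} concave in $\bm{U}_k$ (only in $\bm{T}_k=\bm{U}_k\bm{U}_k^H$), exhibiting a stationary family and maximizing within it does not yet show that the global maximizer lies in that family: you write ``among all stationary points,'' but you only ever compare points of the postulated form, leaving open that some stationary point outside the ansatz does better. The fix is one line: rewrite \eqref{grad_precoder_uplink} as $\bm{N}_k^{-1}\bm{M}_k\bm{U}_k=\bm{U}_k\big(\bm{I}+\bm{U}_k^H\bm{M}_k\bm{U}_k\big)$, which shows that the column space of any stationary $\bm{U}_k$ is an invariant subspace of $\bm{N}_k^{-1}\bm{M}_k$, hence spanned by generalized eigenvectors of the pair $(\bm{M}_k,\bm{N}_k)$; since the objective is invariant under right-multiplication of $\bm{U}_k$ by a unitary, your family then exhausts the stationary points and your eigenvalue comparison completes the argument. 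The paper sidesteps this entirely because Hadamard's inequality characterizes the maximizer rather than a stationary point; with the invariance line added, your proof is a valid alternative.
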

\begin{proof}
Please see Appendix \ref{hybrid_appendix}.
\end{proof}

The generalized dominant eigenvector solution provides the 
optimized beamforming directions but not power \cite{kim2011optimal}.
To include the optimal stream power allocation, we normalize the columns of digital beamformers to unit-norm. This operation preserves the optimized beamforming directions and allows to design the optimal power allocation scheme.
\vspace{-1mm}
\subsection{Analog Beamforming} \label{analog_BF_subsection}

  \begin{figure}[t] 
 \begin{subfigure}{.5\textwidth}
     	\centering
  \includegraphics[width=.75\linewidth,height=5cm]{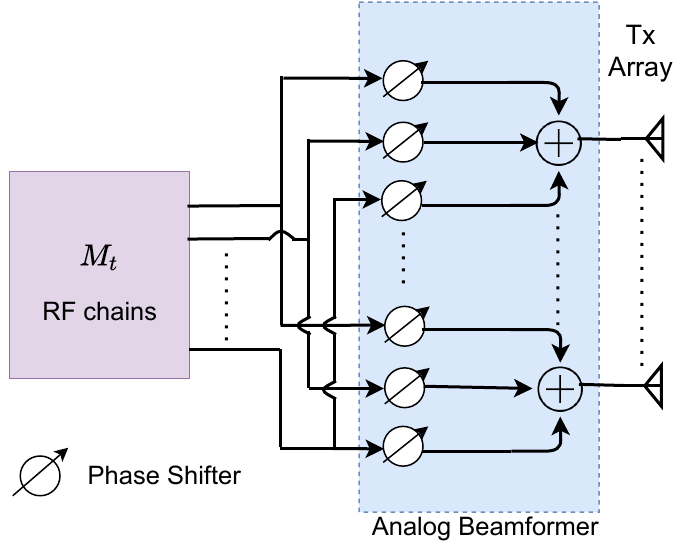}
  	\centering
 	\caption{Analog beamformer with unit-modulus phase shifters.}
 	\label{analog_BF_UM}
 \end{subfigure}
  \begin{subfigure}{.5\textwidth}
     	\centering
  \includegraphics[width=.75\linewidth,height=5cm]{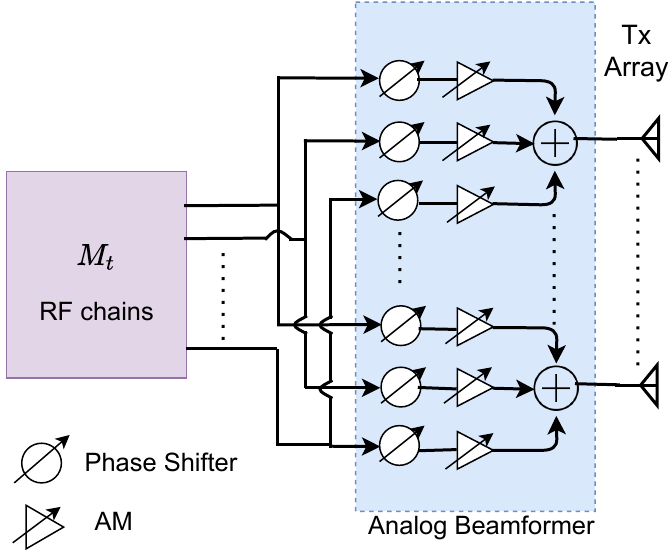}
  	\centering
 	\caption{Analog beamformer with amplitude modulators.}
 	\label{analog_BF_AM}
 \end{subfigure}  
 \caption{(a) All phase shifters are unit-modulus. (b) With amplitude control.}
 \end{figure}
 
This section presents a novel approach to design the analog beamformer for hybrid FD BS in a multi-user scenario to maximize the WSR. The structure of the fully connected analog beamformer $\bm{G}_{RF}$ is shown in Figure 2. Assuming the remaining variables to be fixed, we first consider the optimization of unconstrained analog beamformer $\bm{G}_{RF}$ as
\begin{equation}  \label{analog_restated}
\begin{aligned} 
\underset{\bm{G}_{RF}}{\text{max}.} \;\; & \sum_{j \in \mathcal{D}} w_j  \mbox{lndet}\Big( \bm{I} +  \bm{V}_j^H \bm{G}_{RF}^H \bm{H}_j^H  \bm{R}_{\overline{j}}^{-1} \bm{H}_j \bm{G}_{RF} \bm{V}_j \Big) \\& -  \mbox{Tr}\Big(\bm{V}_j^H \bm{G}_{RF}^H \Big(\hat{\bm{C}}_{j} + \hat{\bm{D}}_{j} + l_0 \bm{I} +  \bm{\Psi}_0 \Big) \bm{G}_{RF} \bm{V}_j  \Big).
\end{aligned} 
\end{equation} 

Note that from \eqref{WSR_convex} only the terms shown in \eqref{analog_restated} depend on the analog combiner $\bm{G}_{RF}$ and information about other variables is captured in gradients $\hat{\bm{C}}_{j}$ and $\hat{\bm{D}}_{j}$.
To solve \eqref{analog_restated}, we take its derivative with respect to the conjugate of $\bm{G}_{RF}$, which yields the following KKT condition
\begin{equation}
\begin{aligned}
    \bm{H}_j^H  &  \bm{R}_{\overline{j}}^{-1}  \bm{H}_j \bm{G}_{RF} \bm{V}_j \bm{V}_j^H \Big(\bm{I} + \bm{V}_j \bm{V}_j^H \bm{G}_{RF}^H \bm{H}_j^H \bm{R}_{\overline{j}}^{-1} \bm{H}_j   \\ & \bm{G}_{RF}  \Big)^{-1}   -  \Big( \hat{\bm{C}}_{j} + \hat{\bm{D}}_{j} + \bm{\Psi}_0 + l_0 \bm{I} \Big) \bm{G}_{RF} \bm{V}_j \bm{V}_j^H=0.
\end{aligned} \label{grad_analog_precoder}
\end{equation} 
Given \eqref{grad_analog_precoder}, the analog beamformer $\bm{G}_{RF}$ for mmWave FD BS can be optimized as stated in the following.

\begin{thm}\label{optimal_hybrid_precoding}
The vectorized unconstrained analog beamformer $\mbox{vec}(\bm{G}_{RF})$ can be optimized as one generalized dominant eigenvector solution of the pair of the following matrices
\begin{equation}
 \begin{aligned}
 \mbox{vec}(\bm{G}_{RF}) & =  \bm{D}_{1}\Big( \sum_{j \in \mathcal{D}}\Big(\bm{V}_j \bm{V}_j^H \Big(\bm{I} + \bm{V}_j \bm{V}_j^H \bm{G}_{RF}^H \bm{H}_j^H \bm{R}_{\overline{j}}^{-1} \\&\bm{H}_j \bm{G}_{RF} \Big)^{-1}\Big)^T  \otimes \bm{H}_j^H  \bm{R}_{\overline{j}}^{-1} \bm{H}_j,\; \\& \sum_{j \in \mathcal{D}} \Big(\bm{V}_j \bm{V}_j^H \Big)^T \otimes \Big(\hat{\bm{C}_{j}}  + \hat{\bm{D}}_{j} + \bm{\Psi}_0 + l_0 \bm{I}\Big)\Big),
    \end{aligned} \label{hybrid_AP}
\end{equation}
where $\bm{D}_{1}(\bm{X})$ selects the first generalized dominant eigenvector from matrix $\bm{X}$.
\end{thm}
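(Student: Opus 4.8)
The plan is to start from the stationarity condition \eqref{grad_analog_precoder} and convert that matrix equation into a generalized eigenvector equation for the single vector $\mbox{vec}(\bm{G}_{RF})$, exploiting the fact that $\bm{G}_{RF}$ enters every summand linearly, flanked on its left and on its right. The key algebraic tool is the identity $\mbox{vec}(\bm{A}\bm{X}\bm{B}) = (\bm{B}^T \otimes \bm{A})\,\mbox{vec}(\bm{X})$, which turns a left/right matrix multiplication of $\bm{G}_{RF}$ into a single matrix acting on $\mbox{vec}(\bm{G}_{RF})$. Since the analog beamformer is common to all downlink users, the gradient of \eqref{analog_restated} is a sum of per-user contributions, so I would first restore the sum over $j \in \mathcal{D}$ in \eqref{grad_analog_precoder}.

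Next I would vectorize each side separately, because the right-multiplying factors of $\bm{G}_{RF}$ differ between the two terms. In the signal term, $\bm{G}_{RF}$ is flanked on the left by $\bm{H}_j^H \bm{R}_{\overline{j}}^{-1}\bm{H}_j$ and on the right by $\bm{V}_j\bm{V}_j^H(\bm{I}+\bm{V}_j\bm{V}_j^H\bm{G}_{RF}^H\bm{H}_j^H\bm{R}_{\overline{j}}^{-1}\bm{H}_j\bm{G}_{RF})^{-1}$, whereas in the cost/gradient term it is flanked on the left by $\hat{\bm{C}}_j+\hat{\bm{D}}_j+\bm{\Psi}_0+l_0\bm{I}$ and on the right by $\bm{V}_j\bm{V}_j^H$. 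Applying the vectorization identity termwise and collecting, the stationarity condition becomes $\bm{M}_1\,\mbox{vec}(\bm{G}_{RF}) = \lambda\,\bm{M}_2\,\mbox{vec}(\bm{G}_{RF})$, where $\bm{M}_1$ is the first matrix of the pencil in \eqref{hybrid_AP}, namely $\sum_{j}(\bm{V}_j\bm{V}_j^H(\bm{I}+\cdots)^{-1})^T\otimes\bm{H}_j^H\bm{R}_{\overline{j}}^{-1}\bm{H}_j$, and $\bm{M}_2$ is the second, namely $\sum_{j}(\bm{V}_j\bm{V}_j^H)^T\otimes(\hat{\bm{C}}_j+\hat{\bm{D}}_j+\bm{\Psi}_0+l_0\bm{I})$. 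This exhibits $\mbox{vec}(\bm{G}_{RF})$ as a generalized eigenvector of the pencil $(\bm{M}_1,\bm{M}_2)$.

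It then remains to argue that, among the eigenvectors solving the stationarity equation, the \emph{dominant} one maximizes \eqref{analog_restated}, which is why $\bm{D}_1(\cdot)$ appears. Here I would reuse the argument of Theorem \ref{optimal_digital_precoding} (Appendix \ref{hybrid_appendix}): along a generalized eigenvector direction the objective \eqref{analog_restated} reduces to an expression that is monotone in the associated generalized eigenvalue, so the direction with the largest eigenvalue yields the largest weighted rate. Selecting $\bm{D}_1$ and then unvectorizing returns the matrix $\bm{G}_{RF}$, with the unit-modulus and quantization constraints \eqref{c6}-\eqref{c7} imposed afterwards, as noted below \eqref{AUG_Largrangian}.

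The main obstacle is twofold. First is the bookkeeping: because the right factors of $\bm{G}_{RF}$ differ in the two terms, one must vectorize each side independently and verify that the two resulting Kronecker products match the pencil in \eqref{hybrid_AP} exactly, since a single transpose on the wrong factor destroys the correspondence. Second, and more conceptually, both $\bm{M}_1$ and $\bm{M}_2$ depend on $\bm{G}_{RF}$ itself, through $\bm{R}_{\overline{j}}$ and through the inverse factor inside $\bm{M}_1$; the generalized eigenproblem is therefore \emph{implicit}, and the precise claim is that a self-consistent fixed point of this eigenproblem is a stationary point of \eqref{analog_restated}. Establishing that this fixed point is the sought maximizer, rather than merely a critical point, is exactly what the monotonicity-in-eigenvalue argument of the digital case supplies, and transferring it cleanly to the vectorized single-eigenvector setting is the step requiring the most care.
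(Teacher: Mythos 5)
Your proposal is correct and follows essentially the same route as the paper: the paper likewise restores the per-user sum, applies $\mbox{vec}(\bm{A}\bm{X}\bm{B})=(\bm{B}^T\otimes\bm{A})\,\mbox{vec}(\bm{X})$ to the KKT condition \eqref{grad_analog_precoder}, and obtains the generalized eigenproblem \eqref{restatement_KKT_analog} whose pencil is exactly that of \eqref{hybrid_AP}. The dominance step you flagged as delicate is closed in the paper exactly as you anticipate: after noise whitening it rewrites the objective as \eqref{WSR_convex_alt}, a log-det-minus-trace quadratic form in $\mbox{vec}(\bm{G}_{RF})$ with the same shape as \eqref{uplink_proof}, so the Cholesky/Hadamard argument of the digital beamformer case applies verbatim to the vectorized variable and selects the dominant generalized eigenvector.
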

\begin{proof}
Please see Appendix \ref{hybrid_appendix}.
\end{proof}
 
Note that Theorem \ref{optimal_hybrid_precoding} provides the optimized vectorized unconstrained analog beamformer $\bm{G}_{RF}$ and we need to reshape it with $\mbox{unvec}(\mbox{vec}(\bm{G}_{RF}))$. To satisfy the unit-modulus and quantization constraints, we do $\bm{G}_{RF}(m,n)=\mathbb{Q}_{P}(\angle \bm{G}_{RF}(m,n)), \forall m,n$.  For HYBF with AMs,  the columns are first scaled to be unit-norm and the quantization constraint is satisfied as $\bm{G}_{RF}(m,n) = \mathbb{Q}_{A}(|\bm{G}_{RF}(m,n)|) \mathbb{Q}_P(\angle\bm{G}_{RF}(m,n))$, $\forall m,n$.

\vspace{-1mm} 

\subsection{Analog Combining}  \label{analog_combining_subsection}
This section presents a novel approach to design the analog combiner $\bm{F}_{RF}$ for mmWave FD BS to serve multiple UL users. Its design is more straightforward than the analog beamformer. Note that the trace terms appearing in \eqref{WSR_convex} have the objective to make beamformers' update aware of the interference generated towards other links. However, $\bm{F}_{RF}$ being a combiner, does not generate any interference and therefore does not appear in the trace terms of \eqref{WSR_convex}. Consequently, to optimize $\bm{F}_{RF}$, we can solve the optimization problem \eqref{WSR} instead of using its minorized version  \eqref{WSR_convex}. By considering the unconstrained analog combiner $\bm{F}_{RF}$, from \eqref{WSR} we have
\begin{equation}
    \underset{\substack{\bm{F}_{RF}}}{\text{max}.} \quad \hspace{-2mm}\sum_{k \in \mathcal{U}} w_k \mbox{lndet}\Big( \bm{R}_{\overline{k}}^{-1} \bm{R}_k \Big). \label{WSR_combiner}
\end{equation} 
To solve \eqref{WSR_combiner}, $\bm{F}_{RF}$ has to combine the signal received at the antenna level of hybrid FD BS but $\bm{R}_k$ and $\bm{R}_{\overline{k}}$ represent the received covariance matrices after analog combining. Let ($\bm{R}_k^{ant}$) $\bm{R}_{\overline{k}}^{ant}$ denote the (signal-plus) interference and noise covariance matrix received at the antennas of FD BS, which can be obtained from ($\bm{R}_{k}$) $\bm{R}_{\overline{k}}$ given in \eqref{covariance_matrices} by omitting $\bm{F}_{RF}$. After analog combining, we can recover $\bm{R}_{k}$ and $\bm{R}_{\overline{k}}$ as $\bm{R}_k = \bm{F}_{RF}^H \bm{R}_k^{ant} \bm{F}_{RF}$ and $\bm{R}_{\overline{k}} = \bm{F}_{RF}^H \bm{R}_{\overline{k}}^{ant} \bm{F}_{RF}$, respectively, $\forall k \in \mathcal{U}$. Problem \eqref{WSR_combiner} can be restated  as a function of $\bm{R}_k^{ant}$ and $\bm{R}_{\overline{k}}^{ant}$ as

\begin{equation}
\begin{aligned}
    \underset{\substack{\bm{F}_{RF}}}{\text{max}.} \quad \hspace{-2mm}\sum_{k \in \mathcal{U}}  \Big[&w_k \mbox{lndet}\Big(\bm{F}_{RF}^H \bm{R}_k^{ant} \bm{F}_{RF} \Big) \\ & - w_k \mbox{lndet}\Big( \bm{F}_{RF}^H \bm{R}_{\overline{k}}^{ant}\bm{F}_{RF} \Big) \Big].
\end{aligned} \label{WSR_combiner_restated}
\end{equation} 

In \eqref{WSR_convex}, the trace term was only linear, which made the restated optimization problem concave for each link. In \eqref{WSR_combiner_restated}, all the terms are fully concave. To optimize $\bm{F}_{RF}$, we take the derivative with respect to the conjugate of $\bm{F}_{RF}$, which yields the following KKT condition
\begin{equation} \label{kkt_combiner}
\begin{aligned}
\sum_{k \in \mathcal{U}} w_k & \bm{R}_k^{ant} \bm{F}_{RF}  \Big( \bm{F}_{RF}^H \bm{R}_k^{ant} \bm{F}_{RF} \Big)^{-1} \\& - \sum_{k \in \mathcal{U}} w_k\bm{R}_{\overline{k}}^{ant} \bm{F}_{RF} \Big( \bm{F}_{RF}^H \bm{R}_{\overline{k}}^{ant} \bm{F}_{RF} \Big)^{-1} = 0.
\end{aligned}
\end{equation}
It is immediate from \eqref{kkt_combiner} that the unconstrained analog combiner can be optimized as the generalized dominant eigenvector solution of the pair of sum of the received covariance matrices at the antenna level from all the $K$ UL users, i.e.,
\vspace{-1mm}
\begin{equation}
    \begin{aligned}
        \bm{F}_{RF} \rightarrow \bm{D}_{Nr}\Big( \sum_{k \in \mathcal{U}} w_k\bm{R}_k^{ant},\;\sum_{k \in \mathcal{U}} w_k\bm{R}_{\overline{k}}^{ant}\Big).
    \end{aligned} \label{gev_combiner}
\end{equation}
To satisfy the unit-modulus and quantization constraints for $\bm{F}_{RF}$, we do $\bm{F}_{RF}(m,n)= \mathbb{Q}_P(\angle \bm{F}_{RF}(m,n)) \in \mathcal{P}$, $\forall m,n $. If AMs are available, the columns are scaled to be unit-norm and quantization constraint is satisfied as $\bm{F}_{RF}(m,n)= \mathbb{Q}_{A}(|\bm{F}_{RF}(m,n))| \mathbb{Q}_P(\angle \bm{F}_{RF}(m,n)), \forall m,n$.

\subsection{Optimal Power Allocation}
Given the normalized digital beamformers and analog beamformer, optimal power allocation can be included while searching for the Lagrange multipliers satisfying the joint sum-power and practical per-antenna power constraints.

Let $\bm{\Sigma}_k^{(1)}$ and $\bm{\Sigma}_k^{(2)}$, $\forall k \in \mathcal{U}$ and $\bm{\Sigma}_j^{(1)}$ and $\bm{\Sigma}_j^{(2)}$, $\forall j \in \mathcal{D}$, be defined as
\begin{subequations}
\begin{equation}
   \bm{U}_k^H  \bm{H}_k^H \bm{F}_{RF} \bm{R}_{\overline{k}}^{-1}  \bm{F}_{RF}^H \bm{H}_k \bm{U}_k   = \bm{\Sigma}_k^{(1)}, \label{diag_1}
\end{equation}
\begin{equation}
    \bm{U}_k^H \Big(\hat{\bm{A}}_{k} + \hat{\bm{B}}_{k} + \bm{\Psi}_k + l_k  \bm{I}\Big) \bm{U}_k = \bm{\Sigma}_k^{(2)},\label{diag_2}
\end{equation}
\begin{equation}
     \bm{V}_j^H \bm{G}_{RF}^H \bm{H}_j^H \bm{R}_{\overline{j}}^{-1} \bm{H}_j \bm{G}_{RF} \bm{V}_j = \bm{\Sigma}_j^{(1)},\label{diag_3}
\end{equation}
\begin{equation}
    \bm{V}_j^H \bm{G}_{RF}^H \Big(\hat{\bm{C}}_{j} +  \hat{\bm{D}}_{j} + \bm{\Psi}_0  + l_0 \bm{I}\Big) \bm{G}_{RF} \bm{V}_j = \bm{\Sigma}_j^{(2)}. \label{diag_4}
\end{equation}\label{diagonal_matrices}
\end{subequations}
Given \eqref{diagonal_matrices}, the optimal stream power allocation can be included based on the result stated in the following.

\begin{lemma} \label{power_allocation}
Optimal power allocation for the hybrid FD BS and multi-antenna UL users can be obtained by multiplying $\bm{\Sigma}_j^{(1)}$ and $\bm{\Sigma}_j^{(2)}$ with the diagonal power matrix $\bm{P}_j$, $\forall j \in \mathcal{D}$ and $\bm{\Sigma}_k^{(1)}$ and $\bm{\Sigma}_k^{(2)}$ with the diagonal power matrix $\bm{P}_k$, $\forall k \in \mathcal{U}$, respectively. 
\end{lemma}
\begin{proof} The beamformers $\bm{U}_k$ and $\bm{V}_k$, are computed as the generalized dominant eigenvectors, which make the matrices $\bm{\Sigma}_k^{(1)},\bm{\Sigma}_k^{(2)}, \forall k$ and $\bm{\Sigma}_j^{(1)},\bm{\Sigma}_j^{(2)}, \forall j$ diagonal at each iteration.
Multiplying any generalized dominant eigenvector solution matrix with a diagonal matrix still yields a generalized dominant eigenvector solution. Therefore, multiplying  $\bm{\Sigma}_k^{(1)},\bm{\Sigma}_k^{(2)}$ with $\bm{P}_k$, $\forall k \in \mathcal{U}$ and $\bm{\Sigma}_j^{(1)},\bm{\Sigma}_j^{(2)}$ with $\bm{P}_j, \forall j \in \mathcal{D}$ still preserves the validity of optimized beamforming directions.
\end{proof}

Given the optimized beamformers and fixed Lagrange multipliers, by using the result stated in Lemma \ref{power_allocation}, stream power allocation optimization problems for  UL and DL users can be formally stated as 
\begin{subequations} \label{power_opt}
\begin{equation} 
\centering
\begin{aligned}
& \underset{\bm{P}_k}{\text{max.}}\; w_k \mbox{lndet}\Big( \bm{I} + \bm{\Sigma}_k^{(1)} \bm{P}_k \Big) - \mbox{Tr}\Big(\bm{\Sigma}_k^{(2)} \bm{P}_k \Big),\quad \forall k \in \mathcal{U}, \\
\end{aligned} \label{UL_pow}
\end{equation}
\begin{equation}
\centering
\begin{aligned}
& \underset{\bm{P}_j}{\text{max.}} w_j \mbox{lndet}\Big( \bm{I} +  \bm{\Sigma}_j^{(1)}\bm{P}_j \Big) - \mbox{Tr}\Big( \bm{\Sigma}_j^{(2)}  \bm{P}_j \Big),\quad \forall j \in \mathcal{D}. \\
\end{aligned}\label{DL_pow}
\end{equation}
\end{subequations}

Solving \eqref{power_opt} leads to the following optimal power allocation scheme
  
\begin{subequations} \label{optimal_pow}
 \begin{equation}
 \begin{aligned}
      \bm{P}_k = \Big( w_k  & \Big( \bm{U}_k^H \Big(\hat{\bm{A}}_{k} + \hat{\bm{B}}_{k} + \bm{\Psi}_k + l_k \bm{I}\Big) \bm{U}_k \Big)^{-1} \\& - \Big(\bm{U}_k^H  \bm{H}_k^H \bm{F}_{RF}  \bm{R}_{\overline{k}}^{-1}  \bm{F}_{RF}^H \bm{H}_k \bm{U}_k \Big)^{-1} \Big)^{+}, \label{UL_powers}
 \end{aligned}
 \end{equation}
 \begin{equation}
 \begin{aligned}
     \bm{P}_j = \Big( w_j  & \Big( \bm{V}_j^H \bm{G}_{RF}^H\Big(\hat{\bm{C}}_{j} +  \hat{\bm{D}}_{j} + \bm{\Psi}_0  + l_0 \bm{I}\Big)\bm{G}_{RF}\bm{V}_j \Big)^{-1}\\& - \Big(\bm{V}_j^H \bm{G}_{RF}^H \bm{H}_j^H \bm{R}_{\overline{j}}^{-1} \bm{H}_j \bm{G}_{RF} \bm{V}_j\Big)^{-1} \Big)^{+}, \label{DL_powers}
 \end{aligned}
 \end{equation}
\end{subequations}  

where $(\bm{X})^+ = \mbox{max}\{\bm{0},\bm{X}\}$. We remark that the proposed power allocation scheme is interference, SI, cross-interference and LDR noise aware as it takes into account their effect in the gradients, which are updated at each iteration. Fixed the beamformers, we can search for multipliers satisfying the joint constraints while doing water-filling for powers. To do so, consider the dependence of Lagrangian \eqref{AUG_Largrangian} on multipliers and powers as
\vspace{-1mm}
\begin{equation} 
\begin{aligned}
& \mathcal{L}(\bm{\Psi},\bm{L}, \bm{P})  = \sum_{l=0}^{K} l_l p_l + \mbox{Tr}\Big(\bm{\Psi_0 \bm{\Lambda_0}}\Big) + \sum_{u \in \mathcal{U}} \mbox{Tr}\Big(\bm{\Psi_u \bm{\Lambda_u}}\Big) \\
 &+ \sum_{k \in \mathcal{U}} w_k \mbox{lndet}\Big( \bm{I} + \bm{\Sigma}_k^{(1)} \bm{P}_k \Big) - \mbox{Tr}\Big(\bm{\Sigma}_k^{(2)} \bm{P}_k \Big)  \\
 & + \sum_{j\in \mathcal{D}} w_j \mbox{lndet}\Big( \bm{I} + \bm{\Sigma}_j^{(1)} \bm{P}_j\Big) - \mbox{Tr}\Big(\bm{\Sigma}_j^{(2)} \bm{P}_j \Big) ,
\end{aligned} \label{Largrangian} 
\end{equation}
where $\bm{P}$ is the set of stream powers in UL and DL. The multipliers in $\bm{\Psi}$ and $\bm{L}$ should be such that the Lagrange dual function \eqref{Largrangian} is finite and the values of multipliers should be strictly positive. Formally, Lagrange multipliers' search problem can be stated as
\begin{equation} \label{dual_func}
\begin{aligned}
  & \underset{\bm{\Psi},\bm{L}}{\text{min}.}\,\, \underset{\bm{P}}{\text{max}.}   \quad  & \mathcal{L}\Big(\bm{\Psi},\bm{L},\bm{P}\Big), \\
  &\quad \quad \mbox{s.t.}  & \bm{\Psi} ,\bm{L} \succeq 0.
\end{aligned}
\end{equation}
The dual function $\underset{\bm{P}}{\text{max}.}\;\mathcal{L}(\bm{\Psi},\bm{L},\bm{P})$ is the pointwise supremum of a family of functions of $\bm{\Psi},\bm{L}$, it is convex \cite{boyd2004convex} and the globally optimal values for $\bm{\Psi}$ and $\bm{L}$ can be obtained by using any of the numerous convex optimization techniques. In this work, we adopt the Bisection algorithm to search the multipliers. 
Let $\mathcal{M}_0 = \{\lambda_0,\psi_1,..,\psi_{M0}\}$ and $\mathcal{M}_k = \{\lambda_k,\psi_{k,1},..,\psi_{k,M_k}\}$ denote the sets containing Lagrange multipliers associated with the sum-power and practical per-antenna power constraints for FD BS and UL user $k \in \mathcal{U}$, respectively. Let $\underline{\mu_i}$ and $\overline{\mu_i}$ denote the lower and upper bound for the search range of multiplier $\mu_i$, where $\mu_i \in \mathcal{M}_0$ or $ \mu_i \in \mathcal{M}_k$. 
While searching multipliers and performing water-filling for powers, the UL and DL power matrices become non-diagonal. Therefore, we consider the SVD of power matrices to shape them back as diagonal. Namely, let $\bm{P}_i$ denote the power matrix for user $i$, where $i \in \mathcal{U} \;\mbox{or} \;i \in\mathcal{D}$. When $\bm{P}_i$ becomes non-diagonal, we consider its SVD as
  \begin{equation}
     [\bm{U}_{P_i},\bm{D}_{P_i},\bm{V}_{P_i}] = \mbox{SVD}(\bm{P}_i).
     \label{eq_SVD_P}
 \end{equation}
 where $\bm{U}_{P_i},\bm{D}_{P_i}$ and $\bm{V}_{P_i}$ are the left unitary, diagonal and right unitary matrices, respectively, obtained with the SVD decomposition, and we set $\bm{P}_i = \bm{D}_{P_i}$ to obtain diagonal power matrices.
 
 For unit-modulus HYBF, the complete alternating optimization based procedure to maximize the WSR based on minorization-maximization is formally stated in Algorithm $1$. For HYBF with AMs, the steps $\angle \bm{G}_{RF}$ and $\angle \bm{F}_{RF}$ must be omitted and amplitudes of the analog beamformer and combiner must be quantized with $\mathbb{Q}_{A}(\cdot)$. Once the proposed algorithm converges, all the combiners can be chosen as the MMSE combiners, which will not affect the WSR achieved with Algorithm 1 $(4)-(9)$ \cite{christensen2008weighted}.

\begin{algorithm}[!t] 
\caption{Practical Hybrid Beamforming Design }\label{alg_1}
\textbf{Given:} $\mbox{The CSI and rate weights.}$\\
\textbf{Initialize:}\;$\bm{G}_{RF}, \bm{V}_j, \forall j \in \mathcal{D}$ and $\bm{U}_k, \forall k \in \mathcal{U} $.\\
\textbf{Set:} $\underline{\mu_i} = 0$ and $\overline{\mu_i} =\mu_{i_{max}}$ $\forall i \in \mathcal{M}_0$ or $\forall i \in \mathcal{M}_k$, $\forall k \in \mathcal{U}$ \\
\textbf{Repeat until convergence}
\begin{algorithmic}
\STATE \hspace{0.1cm} Compute $\bm{G}_{RF}$ \eqref{hybrid_AP}, $ \mbox{unvec}(\bm{G}_{RF})$ and $\bm{G}_{RF} =  \angle \bm{G}_{RF}$.\\
\STATE \hspace{0.1cm} Compute $\bm{F}_{RF}$ with \eqref{gev_combiner}, and do $\bm{F}_{RF} =  \angle \bm{F}_{RF}$.\\
\STATE \hspace{0.1cm} \textbf{for:} $j =1:J$ 
\STATE \hspace{0.5cm} Compute $\hat{\bm{C}}_{j}, \hat{\bm{D}}_{j}$ with \eqref{gradients}\\
\STATE \hspace{0.5cm} Compute $\bm{V}_j$ with \eqref{gev_digital_DL} and normalize it\\
\STATE \hspace{0.1cm} \textbf{end}
\STATE \hspace{0.5cm} \textbf{Set:} $\underline{\mu_0} = 0$ and $\overline{\mu_0} =\mu_{i_{max}}$ $\forall i \in \mathcal{M}_0$  \\
\STATE \hspace{0.5cm} \textbf{for:} $\forall \mu_0 \in \mathcal{M}_0$
\STATE \hspace{0.9cm} \textbf{Repeat until convergence}\\
\STATE \hspace{1.1cm} set $\mu_0 = (\underline{\mu_0} + \overline{\mu_0})/2$  \\ 
\STATE \hspace{1.1cm} Compute $\bm{P}_j$ with \eqref{DL_powers} $\forall j$ \\
\STATE \hspace{1.1cm} \textbf{if} constraint for $\mu_0$ is violated\\
\STATE \hspace{1.3cm} set $\underline{\mu_0} = \mu_0$, \\
\STATE \hspace{1.1cm} \textbf{else} $\overline{\mu_0} = \mu_0$\\
\STATE \hspace{0.3cm}$[\bm{U}_{P_j},\bm{D}_{P_j},\bm{V}_{P_j}] = \mbox{SVD}(\bm{P}_j), \forall j$\\
\STATE \hspace{0.3cm} Set $\bm{P}_j = \bm{D}_{P_j}$ and $\bm{Q}_j = \bm{G}_{RF} \bm{V}_j \bm{P}_j \bm{V}_j^H \bm{G}_{RF}^H, \forall j$
\STATE \hspace{0.1cm} \textbf{for:} $ k = 1:K$  \\
\STATE \hspace{0.5cm} Compute $\hat{\bm{A}}_{k}, \hat{\bm{B}}_{k}$ with \eqref{gradients}\\
\STATE \hspace{0.5cm} Compute $\bm{U}_k$ with \eqref{gev_digital_UL_2} and normalize it\\
\STATE \hspace{0.5cm} \textbf{Set:} $\underline{\mu_k} = 0$ and $\overline{\mu_k} =\mu_{l_{max}}$ \\
\STATE \hspace{0.5cm} \textbf{for:} $\forall \mu_k \in \mathcal{M}_k$
\STATE \hspace{0.9cm} \textbf{Repeat until convergence}\\
\STATE \hspace{1.1cm} set $\mu_k = (\underline{\mu_k} + \overline{\mu_k})/2$  \\ 
\STATE \hspace{1.1cm} Compute $\bm{P}_k$ with \eqref{UL_powers}. 
\STATE \hspace{1.1cm} \textbf{if} constraint for $\mu_0$ is violated\\
\STATE \hspace{1.3cm} set $\underline{\mu_k} = \mu_k$ \\
\STATE \hspace{1.1cm} \textbf{else} $\overline{\mu_k} = \mu_k$\\
\STATE \hspace{0.5cm} $[\bm{U}_{P_k},\bm{D}_{P_k},\bm{V}_{P_k}] = \mbox{SVD}(\bm{P}_k)$\\
\STATE \hspace{0.5cm} Set $\bm{P}_k = \bm{D}_{P_k}$ and $\bm{T}_k = \bm{U}_k \bm{P}_k \bm{U}_k^H$
\STATE Repeat 
\STATE  Quantize $\angle \bm{G}_{RF}$ and $\angle \bm{F}_{RF}$ ($| \bm{G}_{RF}|$ and $|\bm{F}_{RF}|$ with AMs) \\
\end{algorithmic}
\label{algo1} 
\end{algorithm}
 \vspace{-1.5mm}
\subsection{Convergence}
 
In our context, the ingredients required to prove the convergence are minorization \cite{StoicaSelen:SPmag0104}, alternating or cyclic optimization \cite{StoicaSelen:SPmag0104}, Lagrange dual function
\cite{boyd2004convex}, saddle-point interpretation \cite{boyd2004convex} and KKT conditions \cite{boyd2004convex}. For the WSR cost function \eqref{WSR}, we construct its minorizer as in \eqref{eqAk}, \eqref{eqBk}, \eqref{eqCk}, \eqref{eqDk},
which restates the WSR maximization as a concave problem \eqref{WSR_convex} for each link. The minorizer is a touching lower bound for the original WSR problem \eqref{WSR}, so we can write

\begin{equation}
\begin{aligned}
\mbox{WSR} & \geq   \underline{\mbox{WSR}}  = \underline{\mbox{WR}}_{k}^{UL} + \underline{\mbox{WSR}}_{\overline{k}}^{UL} + \underline{\mbox{WR}}_{j}^{DL} +\underline{\mbox{WSR}}_{\overline{j}}^{DL}.
\end{aligned}
\label{eqWSR2}
\end{equation}
The minorizer, which is concave in $\bm{T}_k$ and $\bm{Q}_j$, still has the same gradient of the original WSR
and hence the KKT conditions are not affected. Reparameterizing $\bm{T}_k$ or $\bm{Q}_j$ in terms of $\bm{U}_k, \forall k \in \mathcal{U}$ and $\bm{G}_{RF}\; \mbox{or}\; \bm{V}_j, \forall j \in \mathcal{D}$, respectively, as in \eqref{reparametarization}  with the optimal power matrices and adding the power constraints to the minorizer, we get the Lagrangian
\eqref{AUG_Largrangian}. Every alternating update of $\mathcal{L}$ for $\bm{V}_j$, $\bm{G}_{RF}$, $\bm{U}_k,\forall j \in \mathcal{D}, \forall k \in \mathcal{U}$ or for $\bm{P},\bm{\Lambda},\bm{\Psi}$ leads to an increase of the WSR,
ensuring convergence. For the KKT conditions, at the convergence point, the gradients of $\mathcal{L}$ for $\bm{V}_j$,$\bm{G}_{RF}$, $\bm{U}_j$ or $\bmP$ correspond to the gradients of Lagrangian \eqref{WSR}, i.e., for the original WSR problem. For fixed analog and digital beamformers, $\mathcal{L}$ is concave in $\bmP$, hence we have a strong duality for the saddle point, i.e.
\begin{equation}
    \max_{\bmP} \min_{\bm{L},\bPsi}. \mathcal{L}\Big(\bm{L},\bPsi,\bmP\Big). 
\end{equation}
Let $\bm{X}^*$ and $x^*$
denote the optimal solution for matrix $\bm{X}$ or scalar $x$ at the convergence, respectively. When Algorithm 1 converges, solution of the following optimization problem 
\begin{equation}
    \min_{\bLambda,\bPsi} \mathcal{L}\Big(\bmV^*,\bmG^{*},\bmU^*, \bmP^*,\bm{L},\bm{\Psi}\Big)
\end{equation}
  satisfies the KKT conditions for powers in $\bmP$ and the complementary slackness conditions
\begin{subequations}
\begin{equation}
\begin{aligned}
l_0^* \, \Big(\alpha_0 - \sum_{j\in \mathcal{D}} \mbox{Tr}\Big( \bm{G}_{RF}^{*}\bm{V}_j^{*} \bmP_j^*\bmV_j^{*\,  H}\bm{G}_{RF}^{*\,H}\Big)\Big) &= 0,
\end{aligned}
\end{equation}
\begin{equation}
\begin{aligned}
\mbox{Tr}\Big(\bm{\Psi}_0^* \, \Big(\bmP_0 - \sum_{j\in \mathcal{D}} \mbox{Tr}\Big( \bm{G}_{RF}^{ *}\bmV_j^{*} \bmP_j^*\bmV_j^{*\,  H}\bm{G}_{RF}^{*\,H}\Big)\Big)\Big) &= 0, 
\end{aligned}
\end{equation}
\begin{equation}
\begin{aligned}
l_k^* \, \Big(\alpha_k -  \mbox{Tr}\Big( \bmU_k^{*} \bmP_k^*\bmU_k^{*\,  H}\Big)\Big) & = 0, 
\end{aligned}
\end{equation}
\begin{equation}
\begin{aligned}
\mbox{Tr}\Big(\bm{\Psi}_k^* \, \Big(\bmP_k -  \mbox{Tr}\Big( \bmU_k^{*} \bmP_k^*\bmU_k^{ *\,  H}\Big)\Big)\Big) &= 0,
\end{aligned}
\end{equation}
\label{eqslackness}
\end{subequations}
where all the individual factors in the products are non-negative, and for  per-antenna power constraints $\bm{\Psi}_0^*$ and $\bm{\Psi}_k^*$, the sum of non-negative terms being zero implies all terms result to be zero. 

\begin{figure}
    \centering
    \includegraphics[width=0.49\textwidth]{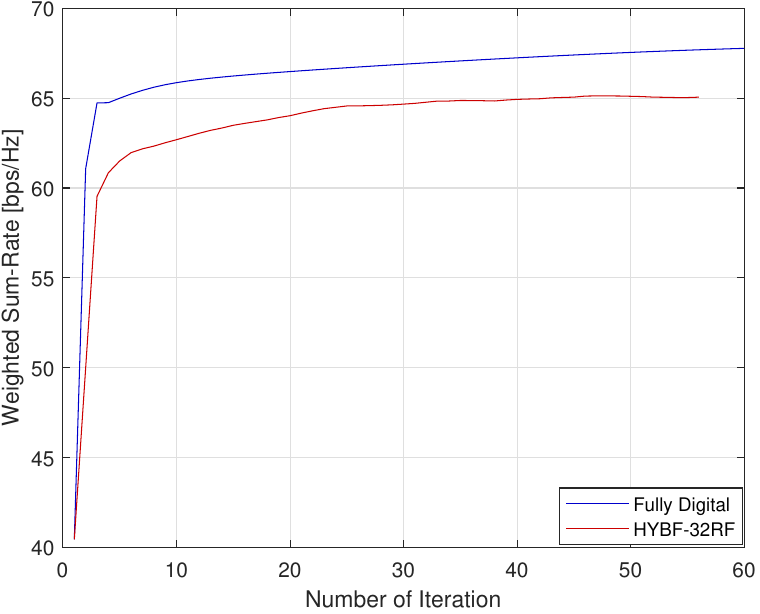}
    \caption{Typical convergence behaviour of the proposed HYBF for mmWave mMIMO FD system.}
    \label{convergenza_alg}
\end{figure}

\vspace{-0.5mm}
\emph{Remark 3:} The unit-modulus HYBF scheme converges to a local optimum where $\angle \bmG_{RF}(m,n), \angle \mathbf{F}_{RF}(m,n)\hspace{-2mm} \in \mathcal{P}$ with $|\bmG_{RF}(m,n)|,|\mathbf{F}_{RF}(m,n)|\hspace{-2mm}=\hspace{-2mm}1, \forall m,n$. Unconstrained HYBF with AMs converges to a different local optimum, where $\angle \bmG_{RF}(m,n), \angle \mathbf{F}_{RF}(m,n) \in \mathcal{P}$ and $|\bmG_{RF}(m,n)|,\; |\mathbf{F}_{RF}(m,n)| \in \mathcal{A},\; \forall m,n$. Due to quantization, $\bmG_{RF}$ and $\mathbf{F}_{RF}$ obtained with Algorithm $1$ tend to lose their optimality and consequently achieve less WSR compared to their infinite resolution case. For unit-modulus HYBF, the loss in WSR depends only on the resolution of phases. For HYBF with AMs, the loss in WSR depends on the resolution of both amplitudes and phases.

\subsection{Complexity Analysis} \label{complexity_analysis}
In this section, we analyze the per-iteration computational complexity of Algorithm 1, assuming that the dimensions of antennas get large. Its one iteration consists in updating $K$ and $J$ digital beamformers for the UL and DL user, respectively, and one analog beamformer and combiner for the FD BS. One dominant generalized eigenvector computation to update analog beamformer $\bm{G}_{RF}$ from a matrix of size $M_t M_0 \times M_t M_0$ in \eqref{hybrid_AP}, is $\mathcal{O}\left(M_0^2 M_t^2\right)$. To update the gradients $\hat{\bmA}_k$ and $\hat{\bmB}_k$ for one UL user, the complexity is given by $\mathcal{O}((K-1) N_r^3)$ and $\mathcal{O}(J N_j^3)$, respectively. For the gradient $\hat{\bmC}_j$ and $\hat{\bmD}_j$, required to update the beamformer of $j$-th DL user, computational complexity is $\mathcal{O}((J-1) N_j^3)$ and $\mathcal{O}(K N_r^3)$, respectively. Updating the beamformers of $k$-th UL and $j$-th DL users as the generalized dominant eigenvectors adds additional complexity of $\mathcal{O}(u_k M_k^2)$ and $\mathcal{O}(v_j N_j^2)$, respectively. The Lagrange multipliers' update associated with the per-antenna power constraints for FD BS and UL users is linear in the number of antennas $M_0$ or $M_k$, respectively. However, as we jointly perform the multipliers' search and power allocation, it adds $\mathcal{O}(v_i^3),$ where $i \in \mathcal{D}$ or $i \in \mathcal{U}$, which can be ignored. Updating the analog combiner $\bm{F}_{RF}$ for FD BS is $\mathcal{O}(N_r N_0^2)$. Under the assumption that the dimensions of antennas get large, the per-iteration complexity is
$\approx  \mathcal{O}( K^2 N_r^3 + K J N_j^3 + J^2 N_j^3 + J K N_r^3 + M_0^2 M_t^2 + N_r N_0^2 )$ which depends on the number of UL and DL users served by the mmWave FD BS.

\section{Simulation Results} \label{simulazioni}
This section presents simulation results to evaluate the performance of the proposed HYBF scheme. For comparison, we define the following benchmark schemes:\\
 
a)  A \emph{Fully digital HD} scheme with LDR noise, serving the UL and  DL users with time division duplexing. Being HD, it is neither affected by the SI nor by the cross-interference.
\\

b)  A \emph{Fully digital FD} scheme with LDR noise. This scheme sets an upper bound for the maximum achievable gain by a hybrid FD system.
\\
  
Hereafter, HYBF designs with the unit-modulus constraint and with AMs are denoted as HYBF-UM and HYBF-AMs, respectively. We define the signal-to-noise-ratio (SNR) for the mmWave mMIMO FD system as 
\begin{equation} 
    \mbox{SNR} = \alpha_0/\sigma_0^2,
\end{equation}

where the scalars $\alpha_0$ and $\sigma_0^2$ denote the total transmit power and thermal noise variance for FD BS, respectively. We set the thermal noise level for DL users to be $\sigma_0^2 = \sigma_j^2, \forall j$, and the transmit power for UL users as $\alpha_0=\alpha_k$, $\forall k$. We consider the total transmit power normalized to $1$ and choose the noise variance based on desired SNR. To compare the gain of a FD system over a HD system, we define the additional gain in percentage as 
\begin{equation}
\mbox{Gain} = \frac{WSR_{FD} - WSR_{HD}}{WSR_{HD}}  \times 100 \;[\%],
\end{equation}

where $WSR_{FD}$ and $WSR_{HD}$ denote the WSR of a FD and HD system, respectively.
To evaluate the performance, we set the per-antenna power constraints for FD BS and UL users as the total transmit power divided by the number of antennas, i.e. $\alpha_0/M_0 \bmI\; \mbox{and}\; \alpha_k/M_k \bmI, \forall k.$ The BS and users are assumed to be equipped with a uniform linear array (ULA) with antennas separated by half-wavelength. The transmit and receive antenna array at the BS are assumed to be placed $D = 20$~cm apart, with the relative angle $\Theta = 90^\circ$, and $r_{m,n}$ is modelled as (9) \cite{satyanarayana2018hybrid}. The Rician factor $\kappa$ for the SI channel is set to be $1$.
We assume that the FD BS has $M_0 = 100$ transmit and $N_0=50$ receive antennas. It serves two UL and two DL users with $M_k = N_j = 5$ antennas and with $2$ data streams for each user. The phases for both designs are quantized in the interval $[0,2\pi]$ with an $8$-bit uniform quantizer $\mathbb{Q}_{P}(\cdot)$. For HYBF with AMs, the amplitudes are uniformly quantized with a $3$-bit uniform quantizer $\mathbb{Q}_{A}(\cdot)$ in the interval $[0, a_{max}]$, where $a_{max} = \mbox{max}\{| \mbox{max}\{\mathbf{G}_{RF}\}|,\mbox{max}\{|\mathbf{F}_{RF}|\}\}$ is the maximum of the maximum modulus of $\mathbf{G}_{RF}$ or $\mathbf{F}_{RF}$. We assume the same LDR noise level for the users and FD BS, i.e. $k_0 = \beta_0 = \kappa_k = \beta_j$. The rate weights for the UL and DL users are set to be $1$.
Aforementioned simulation parameters are summarized in Table \ref{table_parametri}. The digital beamformers are initialized as the dominant eigenvectors of the channel covariance matrices of the intended users. Analog beamformer and combiner are initialized as the dominant eigenvectors of the sum of channel covariance matrices across all the UL and DL users, respectively. Note that as we assume perfect CSI, the SI can be cancelled with HYBF only up to the LDR noise level, which represents the residual SI.

 \begin{table}[!t]
\centering
    \caption{Simulation parameters to simulate the multi-user mmWave FD system.}
    \resizebox{8cm}{!}{%
    \begin{tabular}{|p{29mm}|p{20mm}|p{26mm}|}
        \hline
       \multicolumn{3}{|c|}{Simulation Parameters} \\
       \hline
      UL and DL users &\mbox{$ K,J$}  & 2 \\ \hline
      Data streams   &\mbox{$v_j$},\mbox{$u_k$} & 2   \\ \hline
       Antennas for the BS  &\mbox{$M_0, N_0$} &  100, 50  \\ \hline
       Clusters and Paths  &\mbox{$N_c$},\mbox{$N_p$} & 3,3  \\ \hline
       RF chains (BS)  &\mbox{$ M_t = N_r$} & 8,10,16 or 32  \\  \hline
       User antennas &\mbox{$M_k = N_j$}  &  5  \\  \hline
      Rician Factor   &\mbox{$\kappa$} & 1 \\ \hline
      Tx and Rx array response   &\mbox{$\bm{a}_r$},\mbox{$\bm{a}_t$} &  ULA,ULA \\\hline
       Angles  &\mbox{$\phi_k$},\mbox{$\phi_j$},\mbox{$\theta_k$},\mbox{$\theta_j$} &  $\mathcal{U}$\mbox{$\sim [-30^{\circ},30^{\circ}]$} \\ \hline
        Rate weights &$w_k,w_j$ & 1\\\hline
       Uniform Quantizer &$\mathbb{Q}_P(\cdot), \mathbb{Q}_{A}(\cdot)$ &  $8,3$ bits\\\hline
       Angle between Tx and Rx array (BS) &$\Theta$ & $90^{\circ}$\\\hline
     Antenna array separation (BS) &  $D$ & 20 cm\\\hline
      Per-antenna power constraint  &  $\bm{\Lambda}_k$,$\bm{\Lambda}_0$ & $\alpha_k/M_k \bm{I}$,$\alpha_0/M_0\;\bm{I}$\\ \hline
    \end{tabular}}
          \label{table_parametri}
\end{table}

Figure \ref{fig_100ant_0dB} shows the achieved average WSR with the proposed HYBF designs as a function of the LDR noise with $\mbox{SNR}=0~$dB. The fully digital FD scheme achieves an additional gain of $\sim 97\%$ over a fully digital HD scheme. The impact of different LDR noise levels on the maximum achievable WSR for a mmWave FD system with different number of RF chains is also shown. For $k_0 \leq -40~$dB,  HYBF-UM and HYBF-AMs achieve an additional gain of $\sim 85 \%, 64 \%, 42\%, 3 \%$  and $\sim 89 \%, 74 \%, 60\%, 28 \%$ with $32,16,10,8$ RF chains, respectively. We can see that as the LDR noise variance increases, achievable WSR for both the hybrid FD and fully digital HD system degrades severely. Figure \ref{fig_100ant_40dB} shows the achieved average WSR as a function of the LDR noise with $\mbox{SNR}=40$dB. For $k_0 \leq -80~$dB, HYBF-UM and HYBF-AMs achieve an additional gain of $\sim65 \%, 55\%, 41\%, 15\%$ and $\sim 67 \%, 62 \%, 55 \%, 26 \%$ with $32,16,10,8$ RF chains, respectively, and increasing the LDR noise variance degrades the achieved average WSR. By comparing Figure \ref{fig_100ant_0dB} with Figure \ref{fig_100ant_40dB}, we can see that at low SNR, HYBF-UM with only $8~$RF chains performs close to the fully digital HD scheme. As the SNR increases to $40~$dB, HYBF-UM with $8~$RF achieves an additional gain of $\sim 15\%$. HYBF-AMs with only $8$ RF chains outperforms the fully digital HD scheme for all the SNR levels. Figures \ref{fig_100ant_0dB}-\ref{fig_100ant_40dB} also show that HYBF-AMs with $10$ RF chains achieves similar average WSR as the HYBF-UM with $16$ RF chains. It is interesting to observe that increasing the SNR from $0~$dB to $40~$dB decreases the thermal noise variance and the LDR noise variance dominates the noise floor already with $k_0 = -80~$dB at SNR$=40~$dB. For SNR$=0~$dB, the LDR noise variance dominates only for $k_0 > -40~$dB. From this observation, we can conclude that hardware with a low LDR noise is required to benefit from a high SNR in the mmWave FD systems.
\begin{figure}[t]
     \centering
   \includegraphics[width=0.49\textwidth]{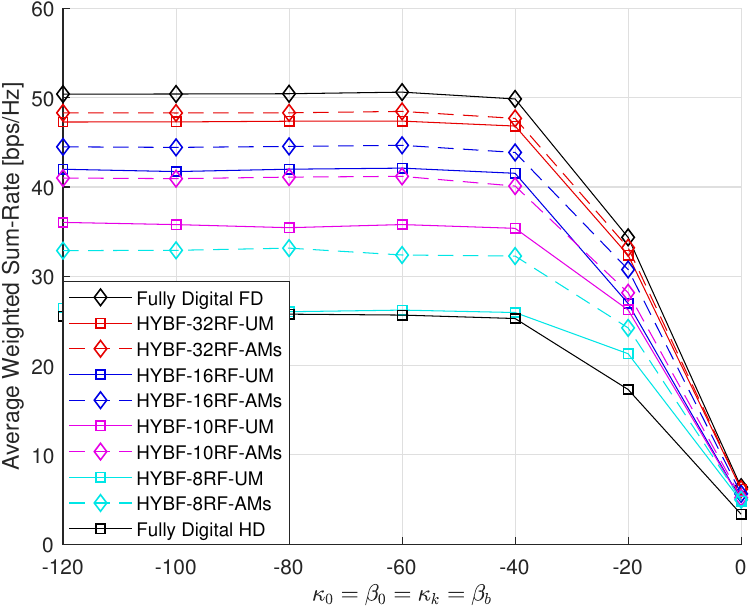}
     \caption{Average WSR as a function of the LDR noise with SNR = $0$~dB.}
       \label{fig_100ant_0dB}
\end{figure}

\begin{figure}[t]
     \centering
       \includegraphics[width=0.49\textwidth]{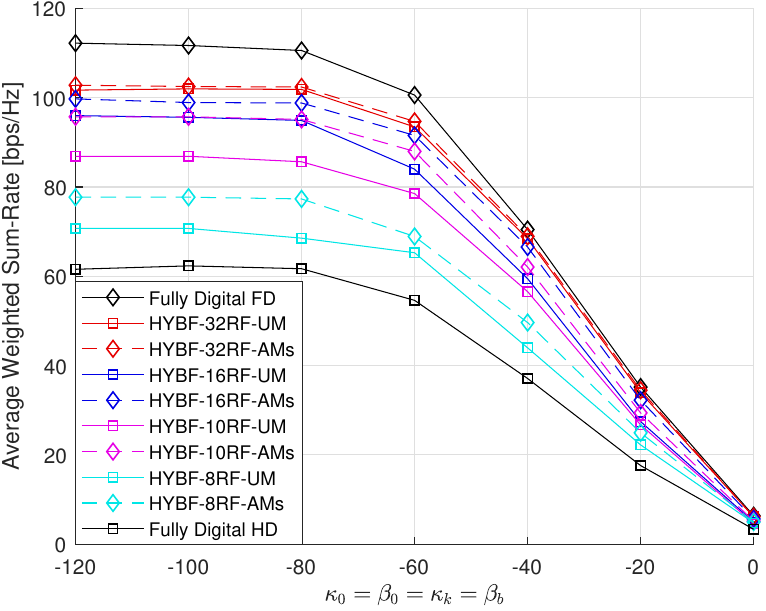}  
        \caption{Average WSR as a function of the LDR noise with SNR = $40$~dB.}
      \label{fig_100ant_40dB}
\end{figure}

 \begin{figure*}[!t]
    \centering
    \begin{minipage}{0.49\textwidth}
        \centering
    \includegraphics[width=\textwidth]{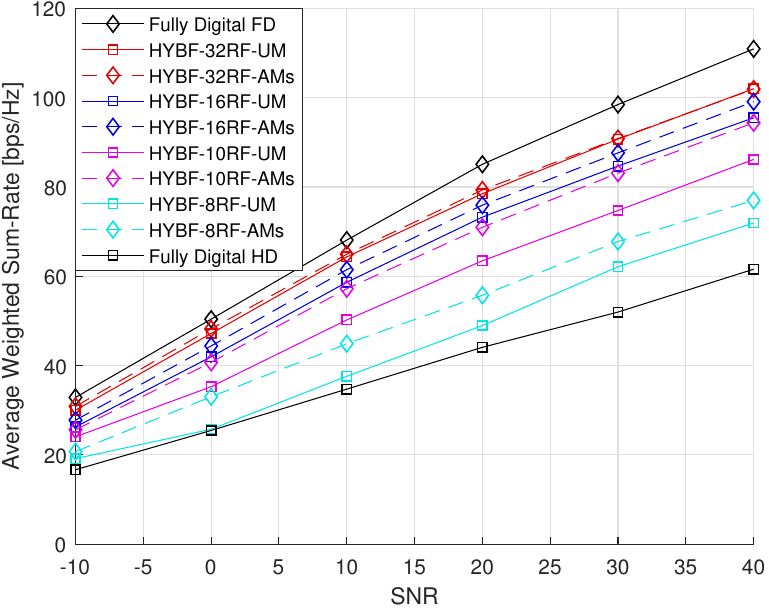}
   \caption{Average WSR as a function of the SNR with LDR noise $k_0= -80$~dB.}
    \label{SNR_32RF_08}
    \end{minipage}\hfill
    \begin{minipage}{0.49\textwidth}
      \centering
      \includegraphics[width=\textwidth]{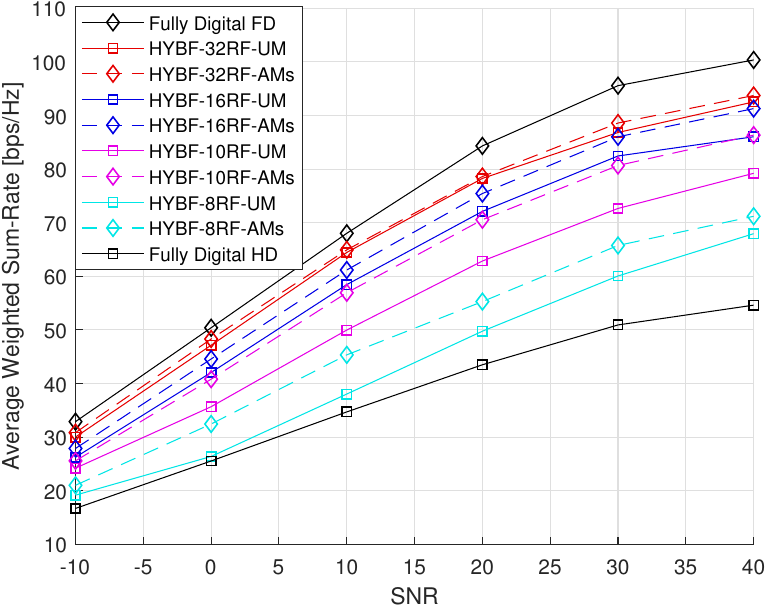}
   \caption{Average WSR as a function of the SNR with LDR noise $k_0= -60$~dB.}
    \label{SNR_32RF_06} 
 \end{minipage}\hfill
\end{figure*}
Figure \ref{SNR_32RF_08} shows the average WSR with a low LDR noise level $\kappa_0= -80~$dB with $32,16,10$ and $8$ RF chains as a function of the SNR. Both the proposed designs perform very close to the fully digital FD scheme with $32~$RF chains. HYBF-UM and HYBF-AMs outperform the fully digital HD scheme with only $8$ RF chains at high SNR and at any SNR level, respectively. It is evident the advantage of AMs, which add additional gain for all the SNR levels when the number RF chains at the FD BS is small. With a high number of RF chains, digital beamforming has enough amplitude manipulation liberty to manage the interference and adding AMs does not bring further improvement. Figure \ref{SNR_32RF_06} shows the average WSR achieved with a moderate LDR noise level $\kappa_0= -60~$dB. We can see that for a low SNR, the achieved average WSR results to be similar as reported in Figure \ref{SNR_32RF_08}. At high SNR, the LDR noise variance starts dominating, which leads to less achieved average WSR compared to the case of Figure \ref{SNR_32RF_08}. Figure \ref{SNR_32RF_04} shows the achieved WSR as a function of the SNR with a very large LDR noise variance of $\kappa_0= -40~$dB. By comparing the results reported in Figure \ref{SNR_32RF_04} and Figures \ref{SNR_32RF_08}-\ref{SNR_32RF_06}, we can see that the LDR noise variance dominates for most of the considered SNR range. For a very low SNR, the achieved WSR is similar as reported in Figures \ref{SNR_32RF_08}-\ref{SNR_32RF_06}. However, as the SNR increases, it does not map into higher WSR. It is clear that the maximum achievable WSR with $\kappa_0= -40~$dB saturates already at SNR$=20~$dB for both the HD and FD systems. Further improvement in the SNR does not dictate into higher WSR. When the LDR noise variance dominates, it acts as a ceiling to the effective received-signal-to-LDR-plus-thermal-noise-ratio (RSLTR). The transmit and receive LDR noise variance is proportional to the total transmit power per-antenna and received power per RF chain after the analog combining, respectively. When the LDR noise variance is large, the thermal noise variance has a negligible effect on the effective RSLTR. Consequently, a decrease in the thermal noise variance (increasing SNR) does not dictate a better WSR.
 
 \begin{figure}[t]
    \centering
      \includegraphics[width=0.49\textwidth]{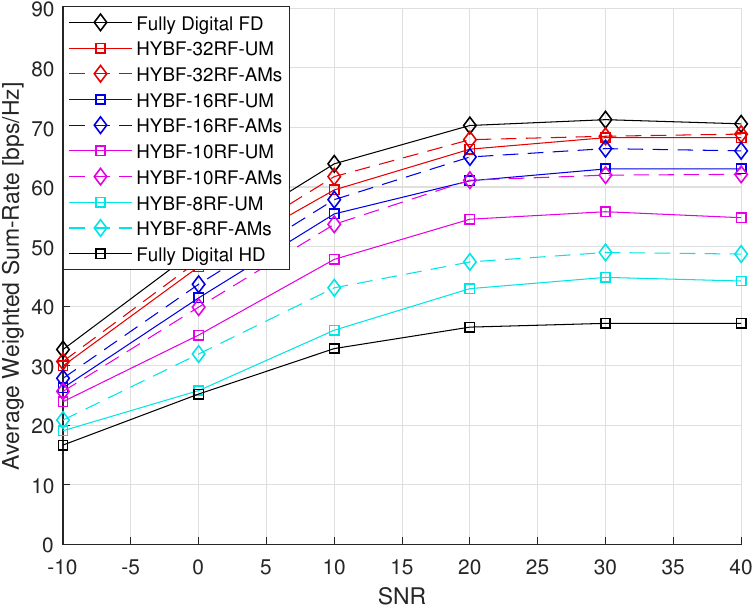}
   \caption{Average WSR as function of the SNR with LDR noise $k_0= -40$~dB.}
    \label{SNR_32RF_04}
\end{figure}

\begin{figure}[t]
   \centering
      \includegraphics[width=0.49\textwidth]{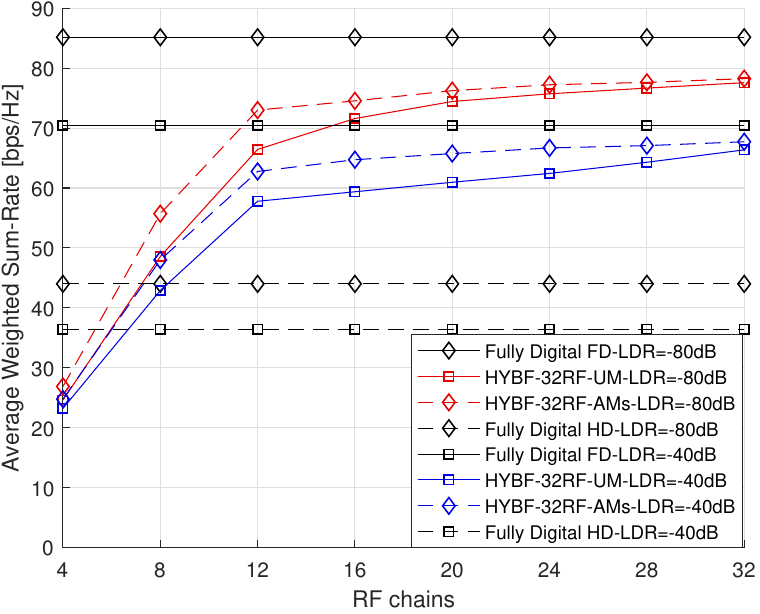}
   \caption{Average WSR as a function of the RF chains with LDR noise $k_0=-80$ dB and $k_0=-40$ dB at SNR$=20~$dB.}
    \label{RF_chains}
\end{figure}

Figure \ref{RF_chains} shows the achievable performance of HYBF-UM and HYBF-AMs as a function of the RF chains with SNR$=20~$dB, in comparison with the benchmark schemes, with very high and very small LDR noise levels. In particular, with very high LDR noise $k_k=-40~$dB and $8$ RF chains, HYBF-UM and HYBF-AMs perform close to the fully HD system, and an increase in the number of RF chains improves the performance, which tends towards the achieved WSR by a fully digital FD system with LDR noise level $k_k=-40~$dB. Similar behaviour can be observed for the case of low LDR noise $k_k=-80~$dB. Both the proposed schemes achieve higher WSR with the same number of RF chains in the latter case. We can also see that AMs add additional gain with a low number of RF chains, and as the number of RF chains increase, the gap in the achievable WSR with HYBF-AMs and HYBF-UM closes. In particular, with $32$ RF chains, the difference in the WSR with or without AMs becomes negligible.

From the results reported in Figures \ref{fig_100ant_0dB}-\ref{RF_chains}, we can conclude that the proposed HYBF schemes achieve significant performance improvement, in terms of average WSR, compared to a fully digital HD system. LDR noise plays a key role in determining the maximum achievable WSR for both the FD and HD systems. Figures \ref{fig_100ant_0dB}-\ref{fig_100ant_40dB} shows how an increase in the LDR noise variance degrades the average WSR at low and high SNR levels. Figures \ref{SNR_32RF_08}-\ref{SNR_32RF_06} shows that with a large to moderate dynamic range, the LDR noise degrades the performance only at very high SNR. Figure \ref{SNR_32RF_04} shows the achieved WSR as a function of a very large LDR noise variance. In that case, it is observed that the WSR saturates at SNR$=20~$dB and further improvement in the SNR does not dictate higher WSR. From Figure \ref{RF_chains}, it is clear how the number of RF chains at the mmWave FD BS affects the achievable WSR with different LDR noise levels and with or without the AMs.

\section{Conclusion} \label{conclusioni}
This paper has presented a novel HYBF design to maximize the WSR in a single-cell mmWave FD system with multi-antenna users and suffering from LDR noise. The beamformers were designed under the joint sum-power and the practical per-antenna power constraints. Simulation results showed that the multi-user mmWave FD systems can outperform the fully digital HD system with only a few RF chains. The advantage of having amplitude control at the analog processing stage is also investigated, and the benefit resulted to be evident with a small number of RF chains. Achievable average WSR with different levels of the LDR noise variance is also investigated, and the proposed HYBF designs outperformed the fully digital HD system at any LDR noise level.

\appendices
\section{Gradient Derivation} \label{grad_proof}
 The proof of Theorem~\ref{thm_grad} is based on the result derived in the following.

\begin{lemma} \label{lemma_derivative}
Let $\bm{Y} = \bm{A X B} + a\;\bm{A}\; \mbox{diag}\big(\bm{X} + \bm{Q}\big) \bm{B} + b\; \mbox{diag}\big(\bm{C X D} + \bm{E}\big) + \bm{F}$. 
The derivative of $\mbox{lndet}\big(\bm{Y}\big)$ with respect to $\bm{X}$ is given by 
\begin{equation}
    \begin{aligned}
     \frac{\partial \mbox{lndet}\bm{Y}}{\partial \bm{X}} = &\bm{A}^H \bm{Y}^{-H} \bm{B}^H + a\; \mbox{diag}\big(\bm{A}^H \bm{Y}^{-H} \bm{B}^H\big) \\& + b\; \bm{C}^H \mbox{diag}\big( \bm{Y}^{-H}\big) \bm{D}^H.
    \end{aligned} \label{lemma_result}
\end{equation}
\end{lemma}
\begin{proof}
By substituting  $\phi = \mbox{lndet}(\bm{Y})$, we can write
\begin{equation}
    \partial \phi = \bm{Y}^{-H}:d\bm{Y} = \mbox{Tr}\big(\bm{Y}^{-1} d\bm{Y}\big),
\end{equation}
where operator $:$ denotes the Frobenius inner product, i.e. $\bm{G}_{RF}:\bm{H} = \mbox{Tr}\big(\bm{G}_{RF}^H \bm{H}\big)$. Its derivative with respect to $\bm{X}$ can be written as
 \begin{equation}
\begin{aligned}
   \frac{\partial \phi}{\partial \bm{X}}& = \bm{Y}^{-H}: \big[\frac{d}{\partial \bm{X}}\big(\bm{A X B} + a\;\bm{A} \mbox{diag}\big(\bm{X}\big) \bm{B} \\& + b\; \mbox{diag}\big(\bm{C X D} + \bm{E}\big) + \bm{F}\big) \big)\big],
\end{aligned}
\end{equation}
where the last term results to be zero as independent from $\bm{X}$. Substituting the Forbenius product with the trace operator, using its cyclic shift and separating terms, yields
\begin{equation}
    \begin{aligned}
     \frac{\partial \phi}{\partial \bm{X}} = & \underbrace{\frac{\partial \;\mbox{Tr}\big( \bm{B} \bm{Y}^{-1} \bm{A X}\big)}{\partial \bm{X}}}_{\RomanNumeralCaps{1}} + \underbrace{a \frac{\partial\; \mbox{Tr}\big(\bm{B} \bm{Y}^{-1}  \bm{A} \mbox{diag}(\bm{X})\big)}{\partial \bm{X}}}_{\RomanNumeralCaps{2}} \\& + \underbrace{b \frac{\partial\; \mbox{Tr}\big(\bm{Y}^{-1}  \mbox{diag}(\bm{C X D})\big) }{\partial \bm{X}}}_{\RomanNumeralCaps{3}} + b \frac{\partial \;\mbox{Tr}\big(\bm{Y}^{-1} \mbox{diag}(\bm{E})\big)}{\partial \bm{X}},
    \end{aligned}
\end{equation}

where the last term being independent of $\bm{X}$ is also zero. To proof the aforementioned result, we proof the derivatives of  $\RomanNumeralCaps{1}, \RomanNumeralCaps{2}$ and $\RomanNumeralCaps{3}$ separately. Firstly, for $\RomanNumeralCaps{1}$, by using $:$ and doing some simple algebric manipulations leads to
\begin{equation}
    \frac{\partial \;\mbox{Tr}\big(\bm{B} \bm{Y}^{-1} \bm{A} \bm{X}\big)}{\partial \bm{X}} = \bm{A}^H \bm{Y}^{-H} \bm{B}^H:\partial\bm{X} = \bm{A}^H \bm{Y}^{-H} \bm{B}^H.
\end{equation}
To obtain the derivative of $\RomanNumeralCaps{2}$, we first define $\mbox{diag}\big(\bm{X}\big) = \bm{Z}$. The diagonal of $\bm{X}$ can be written as $\mbox{diag}\big(\bm{X}\big) = \bm{I} \circ \bm{X}$ 
where $\circ$ denotes the Hadamard product. By writing $\RomanNumeralCaps{2}$ with $:$ and expressing the diagonal term as a function of $\circ$, and using the 
commutative property of the Hadamard product leads to the following result
\begin{equation} \label{hadr}
    \begin{aligned}
     a\; \frac{\partial\; \mbox{Tr}\big(\bm{B} \bm{Y}^{-1}  \bm{A} \bm{Z}\big)}{\partial \bm{Z}} & =a\; \bm{A}^H \bm{Y}^{-H} \bm{B}^H: \partial\bm{Z}, \\ & =a\; \bm{A}^H \bm{Y}^{-H} \bm{B}^H: \bm{I} \circ \partial \bm{X}, \\& =a\; \bm{A}^H \bm{Y}^{-H} \bm{B}^H \circ \bm{I}:\partial \bm{X},\\&
     = a \;\mbox{diag}\big(\bm{A}^H \bm{Y}^{-H} \bm{B}^H\big).
    \end{aligned}
\end{equation}
To compute the derivative of $\RomanNumeralCaps{3}$, we first define $\mbox{diag}\big(\bm{C} \bm{X} \bm{D}\big) = \bm{W}$. By using a similar approach as in \eqref{hadr}, we get
\begin{equation}
    \begin{aligned}
     b\; \frac{\partial\; \mbox{Tr}\big(\bm{Y}^{-1}\bm{W}\big)}{\partial \bm{W}} & = b\;\bm{Y}^{-H}:\partial\bm{W}, \\& = b\bm{Y}^{-H}: \bm{I}\circ \bm{C} \partial \bm{X} \bm{D}, \\ &
    = b\;\bm{Y}^{-H} \circ \bm{I}: \bm{C} \partial\bm{X} \bm{D}, \\ &
    = b\; \mbox{diag}\big(\bm{Y}^{-H}\big) :\bm{C} \partial \bm{X} \bm{D},
    \\ & =  b\; \bm{C}^H \mbox{diag}\big( \bm{Y}^{-1}\big)^H \bm{D}^H.
    \end{aligned}
\end{equation}

Combining the result from each term concludes the proof for Lemma \eqref{lemma_result}. 
\end{proof}
To prove Theorem \ref{thm_grad}, note that the covariance matrices in \ref{covariance_matrices} has a special (Hermitian) structure, i.e., $\bm{B}= \bm{A}^H$ and $\bm{D}= \bm{C}^H$. Therefore, the result of Lemma \ref{lemma_derivative} for this particular case is given in the following.

\begin{lemma} \label{Lemma_specialCase}
 Let $\bm{Y} = \bm{A X B} + a\;\bm{A} \mbox{diag}\big(\bm{X} + \bm{Q}\big) \bm{B} + b\; \mbox{diag}\big(\bm{C X D} + \bm{E} \big) + \bm{F}$, where the size of matrices involved is such that the product is valid. Let $\bm{B} = \bm{A}^H$ and $\bm{D} = \bm{C}^H$ and the derivative of $\mbox{lndet}(\bm{Y})$ is given by
 \begin{equation}
    \begin{aligned}
     \frac{\partial \mbox{lndet}\bm{Y}}{\partial \bm{X}} = &\bm{A}^H \bm{Y}^{-H} \bm{A} + a\; \mbox{diag}\big(\bm{A}^H \bm{Y}^{-H} \bm{A}\big) \\& + b\; \bm{C}^H \mbox{diag}\big(\bm{Y}^{-H}\big) \bm{C}.
    \end{aligned} \label{lemma_result1}
\end{equation}
\end{lemma}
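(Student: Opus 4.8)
The plan is to obtain Lemma~\ref{Lemma_specialCase} as an immediate specialization of Lemma~\ref{lemma_derivative}, which has already been proved for arbitrary matrices $\bm{A},\bm{B},\bm{C},\bm{D},\bm{E},\bm{F}$ and $\bm{Q}$. First I would simply invoke that general result, whose right-hand side is $\bm{A}^H \bm{Y}^{-H} \bm{B}^H + a\,\mbox{diag}\big(\bm{A}^H \bm{Y}^{-H} \bm{B}^H\big) + b\,\bm{C}^H \mbox{diag}\big(\bm{Y}^{-H}\big) \bm{D}^H$ as in \eqref{lemma_result}. The only extra hypothesis imposed here is the Hermitian structure $\bm{B}=\bm{A}^H$ and $\bm{D}=\bm{C}^H$, which is precisely what every covariance matrix in \eqref{covariance_matrices} satisfies, so the specialization is exactly the case of interest for deriving the gradients of Theorem~\ref{thm_grad}.

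The core step is then a one-line algebraic substitution. Using the involution property of the conjugate transpose, $\bm{B}^H = (\bm{A}^H)^H = \bm{A}$ is inserted into the first two terms and $\bm{D}^H = (\bm{C}^H)^H = \bm{C}$ into the third term of \eqref{lemma_result}. This directly produces $\bm{A}^H \bm{Y}^{-H} \bm{A} + a\,\mbox{diag}\big(\bm{A}^H \bm{Y}^{-H} \bm{A}\big) + b\,\bm{C}^H \mbox{diag}\big(\bm{Y}^{-H}\big) \bm{C}$, which is the claimed expression \eqref{lemma_result1}. No fresh differentiation is needed, since the full chain of Frobenius-inner-product and cyclic-trace manipulations was already carried out in the proof of Lemma~\ref{lemma_derivative}; the substitution only needs to be consistent with the fact that setting $\bm{B}=\bm{A}^H$ in $\bm{Y}$ itself leaves the hypotheses of Lemma~\ref{lemma_derivative} intact.

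Because the heavy lifting is entirely inherited from the general lemma, there is essentially no obstacle to overcome here. The only point worth a moment's care is that the $\mbox{diag}(\cdot)$ operator, acting entrywise, commutes with the replacement of $\bm{B}^H$ by $\bm{A}$ inside its argument, so no reordering or additional symmetry argument is required. I would therefore keep the write-up short: state that the result follows from Lemma~\ref{lemma_derivative} under $\bm{B}=\bm{A}^H$, $\bm{D}=\bm{C}^H$, and exhibit the substituted expression as the conclusion.
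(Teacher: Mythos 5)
Your proposal is correct and follows the same route as the paper's own proof: both invoke Lemma~\ref{lemma_derivative} and specialize it by substituting $\bm{B}=\bm{A}^H$, $\bm{D}=\bm{C}^H$, then cancel the double conjugate transpose via $(\bm{A}^H)^H=\bm{A}$ and $(\bm{C}^H)^H=\bm{C}$ to obtain \eqref{lemma_result1}. No gap to report; the paper merely writes out the substitution explicitly, while you state it in one line.
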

\begin{proof}
The result follows directly by relying on the result given in Lemma \ref{lemma_derivative} by substituting $\bm{B} = \bm{A}^H$ and $\bm{D} = \bm{C}^H$

\end{proof}

\begin{proof}\emph{Theorem \ref{thm_grad}}
To prove the gradients to linearize the WSR with respect to $\bm{T}_k$ and $\bm{Q}_j$, we proceed by simplifying the WSR as 
\begin{equation}
\begin{aligned}
  \mbox{WSR}   &= \sum_{k \in \mathcal{U}} w_k  \mbox{lndet} \Big(\bm{R}_{k} \Big) -  w_k \mbox{lndet} \Big(\bm{R}_{\overline{k} }\Big) \\&+ \quad \sum_{j \in \mathcal{D}}  w_j \mbox{lndet}\Big(\bm{R}_j\Big) -  w_j  \mbox{lndet}\Big( \bm{R}_{\overline{j}}\Big) .
\end{aligned}
\end{equation}

The $\mbox{WSR}_{\overline{k}}^{UL}$ and $\mbox{WSR}^{DL}$ should be linerized for $\bm{T}_k$ and $\mbox{WSR}_{\overline{j}}^{DL}$ and $\mbox{WSR}^{UL}$ for $\bm{Q}_j$. Note from \eqref{covariance_matrices} that $\bm{T}_k$ appears in $\mbox{WSR}_{\overline{k}}^{UL}$ and  $\mbox{WSR}^{DL}$ with the structure 
$\bm{Y} = \bm{A X} \bm{A}^H + a\;\bm{A}\; \mbox{diag}\Big(\bm{X} + \bm{Q}\Big) \bm{A}^H + b\; \mbox{diag}\Big(\bm{C X} \bm{C}^H + \bm{E} \Big) + \bm{F}$, where the scalars $a$ and $b$ are due to the LDR noise model, $\bm{A}$ and $\bm{C}$ are the interfering channels, $\bm{F}$ and $\bm{E}$ contain the noise contributions from other transmit covariance matrices but independent from $\bm{T}_k$.  The same structure holds also for the DL covariance matrices $\bm{Q}_j, \forall j \in \mathcal{D}$. By applying the result from Lemma \ref{Lemma_specialCase} with $\bm{Y} = \bm{R}_k$ or $\bm{Y} = \bm{R}_{\overline{k}}$ repetitively  $K-1$ time for linearizing  $\mbox{WSR}_{\overline{k}}$ with respect to $\bm{T}_k$ yield the gradient $\bm{A}_{k}$. Similarly, by considering $\bm{Y} = \bm{R}_j$ or $\bm{Y} = \bm{R}_{\overline{j}}$, $\forall j \in \mathcal{D}$ and applying the result from Lemma \ref{Lemma_specialCase} yield the gradient $\bm{B}_{k}$.

The same reasoning holds also for $\bm{Q}_j$, which leads to the gradients $\hat{\bm{C}}_j$ and $\bm{D}_j$ by applying the result provided in Lemma \ref{Lemma_specialCase} for $\mbox{WSR}_{\overline{j}}^{DL}$ $J-1$ times and for $\mbox{WSR}^{UL}$ $K$ times, respectively, $\forall j \in \mathcal{D}$.
\end{proof}

\section{Proof of Theorem \textbf{\ref{optimal_hybrid_precoding}}} \label{hybrid_appendix}
The dominant generalized eigenvector solution maximizes the reformulated concave WSR maximization problem 

\begin{equation}
\begin{aligned}
\mbox{WSR}& =\sum_{k \in \mathcal{U}} w_k\mbox{lndet}\Big(\bm{I} + \bm{U}_k^H  \bm{H}_k^H \bm{F}_{RF}  \bm{R}_{\overline{k}}^{-1}  \bm{F}_{RF}^H \bm{H}_k \bm{U}_k  \Big) 
\\& -  \mbox{Tr}\Big(\bm{U}_k^H \Big(\hat{\bm{A}}_k + \hat{\bm{B}}_k + l_k \bm{I}+ \bm{\Psi}_k \Big) \bm{U}_k  \Big) \\
 & + \sum_{j\in \mathcal{D}} w_j \mbox{lndet}\Big( \bm{I} + \bm{V}_j^H \bm{G}_{RF}^H \bm{H}_j^H \bm{R}_{\overline{j}}^{-1} \bm{H}_j \bm{G}_{RF} \bm{V}_j\Big) \\& - \mbox{Tr}\Big(\bm{V}_j^H \bm{G}_{RF}^H \Big(\hat{\bm{C}_j} + \hat{\bm{D}_j} +  l_0 \bm{I}+  \bm{\Psi}_0 \Big) \bm{G}_{RF} \bm{V}_j  \Big) \Big).
\end{aligned} \label{eq_restate}
\end{equation}

To prove Theorem \ref{optimal_hybrid_precoding} for solving \eqref{eq_restate}, we first consider the UL digital beamforming solution by keeping the analog beamformer and the digital DL beamformers fixed. We proceed by considering user $k \in \mathcal{U}$ for which we wish to compute the WSR maximizing digital UL beamformer. The same proof will be valid $\forall k \in \mathcal{U}$. The proof relies on simplifying 
\begin{equation}
\begin{aligned}
\underset{\bm{U}_k}{\text{max.}}\quad  & w_k\mbox{lndet}\Big(\bm{I} + \bm{U}_k^H  \bm{H}_k^H \bm{F}_{RF}  \bm{R}_{\overline{k}}^{-1}  \bm{F}_{RF}^H  \bm{H}_k \bm{U}_k  \Big)  \\&-  \mbox{Tr}\Big(\bm{U}_k^H \Big(\hat{\bm{A}}_k + \hat{\bm{B}}_k + l_k \bm{I}+ \bm{\Psi}_k \Big) \bm{U}_k  \Big)
\end{aligned} \label{UL_proof}
\end{equation}
until the Hadamard's inequality applies as in Proposition 1\cite{kim2011optimal} or Theorem 1 \cite{hoang2008noncooperative}. The Cholesky
decomposition of the matrix $\Big(\hat{\bm{A}}_k + \hat{\bm{B}}_k + l_k + \bm{\Psi}_k)$ is given as $\bm{L}_k \bm{L}_k^H$ where $\bm{L}_k $ is the lower triangular Cholesky factor. By defining $\tilde{\bm{U}_k} = \bm{L}_k^H \bm{U}_k$, \eqref{UL_proof} reduces to 
\begin{equation}
\begin{aligned}
\underset{\bm{U}_k}{\text{max.}} \;  w_k \mbox{lndet}\Big( \bm{I} & + \tilde{\bm{U}_k}^H \bm{L}_k^{-1} \bm{H}_k^H \bm{F}_{RF}  \bm{R}_{\overline{k}}^{-1}  \bm{F}_{RF}^H \bm{H}_k \\& \bm{L}_k^{-H} \tilde{\bm{U}_k} \Big)  -  \mbox{Tr}\Big(\tilde{\bm{U}_k}^H \tilde{\bm{U}_k}  \Big).
\end{aligned} \label{UL_proof_1}
\end{equation}
Let $\bm{E}_k \bm{D}_k \bm{E}_k^H $ be the eigen-decomposition of $\bm{L}_k^{-1} \bm{H}_k^H \bm{R}_{\overline{k}}^{-1} \bm{H}_k \bm{L}_k^{-H}$, where $\bm{E}_k$ and $\bm{D}_k$ are the unitary and diagonal matrices, respectively. Let $\bm{O}_k = \bm{E}_k^H \tilde{\bm{U}_k} \tilde{\bm{U}_k}^H \bm{E}_k$ and \eqref{UL_proof_1} can be expressed as 
\begin{equation}
\begin{aligned}
\underset{\bm{O}_k}{\text{max.}}\quad  & w_k \mbox{lndet}\Big( \bm{I} + \bm{O}_k \bm{D}_k  \Big) -  \mbox{Tr} \Big(\bm{O}_k \Big).
\end{aligned} \label{UL_proof_2}
\end{equation}
By Hadamard’s inequality [Page 233 \cite{cover1999elements}]
, it can be seen that the optimal $\bm{O}_k$ must be diagonal. Therefore, $\bm{U}_k = \bm{L}_k^{-H}\bm{E}_k \bm{O}_k^{\frac{1}{2}} $ and thereby
\begin{equation}
\begin{aligned}
          \bm{H}_k^H \bm{F}_{RF}  \bm{R}_{\overline{k}}^{-1}  \bm{F}_{RF}^H & \bm{H}_k  \bm{U}_k =  \bm{L}_k \bm{L}_k^{H} \bm{L}_k^{-H} \bm{E}_k  \bm{O}_k^{\frac{1}{2}} \bm{D}_k\\& =   \Big(\hat{\bm{A}}_k + \hat{\bm{B}}_k + l_k + \bm{\Psi}_k \Big) \bm{U}_k \bm{D}_k, \label{solution_gev}
\end{aligned}
\end{equation}
from which we select $u_k$ dominant eigenvectors, which concludes the proof for the UL beamformer for user $k \in \mathcal{U}$. For the digital DL beamformers the proof follow similarly by considering the following optimization problem $\forall j$

\begin{equation}
\begin{aligned}
\underset{\bm{V}_j}{\text{max.}}&
 w_j \mbox{lndet}\Big( \bm{I} + \bm{V}_j^H \bm{G}_{RF}^H \bm{H}_j^H \bm{R}_{\overline{j}}^{-1} \bm{H}_j \bm{G}_{RF} \bm{V}_j\Big) \\& - \mbox{Tr}\Big(\bm{V}_j^H \bm{G}_{RF}^H \Big(\hat{\bm{C}_j} + \hat{\bm{D}_j} +  l_0 +  \bm{\Psi}_0 \Big) \bm{G}_{RF} \bm{V}_j \Big).
\end{aligned}
\end{equation}
and simplifying it until the Hadamard's inequality applies to yield a similar result as expressed in \eqref{solution_gev}.

The proof for analog beamformer $\bm{G}_{RF}$ does not apply directly as the KKT condition have the form $\bm{A}_1 \bm{G}_{RF} \bm{A}_2 = \bm{B}_1 \bm{G}_{RF} \bm{B}_2$, which are not resolvable. To solve it for the analog beamformer $\bm{G}_{RF}$, we apply the result $\mbox{vec}(\bm{A} \bm{X} \bm{B}) = \bm{B}^T \otimes \bm{A}\mbox{vec}(\bm{X})$ \cite{magnus2019matrix}, which allows to rewrite \eqref{grad_analog_precoder} as 
\begin{equation}
\begin{aligned}
    \sum_{j \in \mathcal{D}}& w_j \Big( \Big(\bm{V}_j \bm{V}_j^H \Big(\bm{I} + \bm{V}_j \bm{V}_j^H \bm{G}_{RF}^H \bm{H}_j^H \bm{R}_{\overline{j}}^{-1} \bm{H}_j \bm{G}_{RF} \Big)^{-1}\Big)^T \otimes \\
      & \bm{H}_j^H \bm{R}_{\overline{j}}^{-1}  \bm{H}_j \Big) \mbox{vec}\Big(\bm{G}_{RF}\Big) - \sum_{j \in \mathcal{D}} \Big(\Big(\bm{V}_j \bm{V}_j^H \Big)^T \otimes \Big(  \hat{\bm{C}_j}\\&  + \hat{\bm{D}_j}
     + \bm{\Psi}_0  + l_0 \bm{I}\Big)\Big) \mbox{vec}\Big(\bm{G}_{RF}\Big)=0.
\end{aligned} \label{restatement_KKT_analog}
\end{equation}

The WSR maximizing analog beamformer can alternatively be derived as follows (which allows the proof for the digital beamformers to be applicable directly). First we apply a noise whitening procedure using the noise plus interference covariance matrix $\mathbf{R}_{\overline{j}}^{1/2}$on the received signal. Further, we can rewrite the whitened signal as follows
\begin{equation}
\begin{aligned}
         \widetilde{\bm{y}}_j =   &\left(\left(\bm{s}_{j_d}^T\bm{V}_j^T\right)\otimes \mathbf{R}_{\overline{j}}^{-1/2}\bm{H}_j\right)\mbox{vec}(\bm{G}_{RF}) + \widetilde{\mathbf{n}}_j ,
\end{aligned}
    \label{Rx_side_1}
\end{equation}

where $\widetilde{\mathbf{y}}_j=\mathbf{R}_{\overline{j}}^{-1/2}\mathbf{y}_j$ and $\widetilde{\mathbf{n}}_j$ represents the whitened noise plus interference signal. We can write the resulting WSR optimization problem, after the approximation to concave form and some algebraic manipulations on the linearized term, as

\begin{equation} 
\begin{aligned}
\underset{\bm{G}_{RF}}{\text{max}} \quad   \sum_{j \in \mathcal{D}} & w_j  
 \mbox{lndet}\Big(\bm{I} +     \mbox{vec}\Big(\bm{G}_{RF}\Big) ^H\Big(\Big(\bm{V}_j \bm{V}_j^H \Big)^T \otimes \bm{H}_j^H \bm{R}_{\overline{j}}^{-1} \\& \bm{H}_j \Big) \mbox{vec}\Big(\bm{G}_{RF}\Big) \Big) - \mbox{Tr} \Big( \mbox{vec}\Big(\bm{G}_{RF}\Big)^H\Big( \bm{V}_j\bm{V}_j^H \otimes  \\ &   \Big(\hat{\bm{C}}_{j}  +  \hat{\bm{D}}_{j}\Big) + \bm{\Psi}_0  + l_0 \bm{I} \Big) \mbox{vec}\Big(\bm{G}_{RF}  \Big) \Big).  
\end{aligned} \label{WSR_convex_alt}
\end{equation}

Taking the derivative of \eqref{WSR_convex_alt} for the conjugate of $\mathbf{G}_{RF}$ leads to the same generalized eigenvector solution as in \eqref{hybrid_AP}. Note that this alternative representation has the same form as  \eqref{UL_proof}, which is resolvable for the vectorized version of the analog beamformer $\bm{G}_{RF}$. Therefore, the proof for the UL and DL digital beamformers can now be applied directly on the vectorized analog beamformer $\mbox{vec}(\bm{G}_{RF})$, which is summed over all the DL users served by the mmWave FD BS.

\section*{Acknowledgment}

EURECOMs research is also partially supported
by its industrial members: ORANGE, BMW, Symantec,
SAP, Monaco Telecom, iABG, and by the projects MASS-START
(French FUI), DUPLEX (French ANR), SPOTLIGHT (EU ITN)
and the Qualcomm Fab5G project.

\bibliographystyle{IEEEtran}
\bibliography{main.bib}

\end{document}